\let\mathcal\mathscr
\newcommand*{\pd}[2]{\mathchoice{\frac{\partial#1}{\partial#2}}
  {\partial#1/\partial#2}{\partial#1/\partial#2}
  {\partial#1/\partial#2}}
\let\phi=\varphi
\let\kappa=\varkappa
\DeclareMathOperator{\sym}{sym}
\newcommand*{\sd}[2]{\{\,#1\mid#2\,\}}
\newcommand*{\eval}[1]{\left.#1\right|}
\newcommand*{\abs}[1]{\left|#1\right|}
\newcommand*{\Ev}{\mathbf{E}}
\theoremstyle{theorem}
\newtheorem{proposition}{Proposition}
\numberwithin{proposition}{section}
\newtheorem{corollary}{Corollary}
\numberwithin{corollary}{section}
\newtheorem{theorem}{Theorem}
\numberwithin{theorem}{section}
\newtheorem{lemma}{Lemma}
\numberwithin{lemma}{section}
\theoremstyle{definition}
\theoremstyle{remark}
\let\mathcal\mathscr
\newcommand{\cprime}{\/{\mathsurround=0pt$'$}}
\begin{document}
\date{\today} \title[Reductions and their symmetry properties]{Reductions of
  the universal hierarchy and rdDym equations and their symmetry properties}

\author{P.~Holba} \address{Mathematical
  Institute, Silesian University in Opava, Na Rybn\'{\i}\v{c}ku 1, 746 01
  Opava, Czech Republic} \email{M160016@math.slu.cz}
\author{I.S.~Krasil{\cprime}shchik} \address{Independent University of Moscow,
  B. Vlasevsky 11, 119002 Moscow, Russia
  \& Trapeznikov
  Institute of Control Sciences, 65 Profsoyuznaya street, Moscow 117997,
  Russia} \email{josephkra@gmail.com} \author{O.I.~Morozov} \address{Faculty
  of Applied Mathematics, AGH University of Science and Technology,
  Al. Mickiewicza 30, Krak\'ow 30-059, Poland}
\email{morozov{\symbol{64}}agh.edu.pl} \author{P.~Voj{\v{c}}{\'{a}}k}
\address{Mathematical Institute, Silesian University in Opava, Na
  Rybn\'{\i}\v{c}ku 1, 746 01 Opava, Czech Republic}
\email{Petr.Vojcak@math.slu.cz}

\date{\today}

\begin{abstract}
  We consider the equations $u_{yy} = u_y u_{xx} - (u_x + u)u_{xy} + u_x u_y$
  and $u_{yy} = (u_x + x)u_{xy} - (u_{xx} + 2)u_y$ that arise as reductions of
  the universal hierarchy and rdDym equations, respectively, and describe the
  Lie algebras of nonlocal symmetries in infinite-dimensional coverings
  naturally associated to these equations.
\end{abstract}

\keywords{Partial differential equations, Lax integrable equations, symmetry
  reductions, nonlocal symmetries, universal hierarchy equation, rdDym equation}

\subjclass[2010]{35B06}

\maketitle

\tableofcontents
\newpage

\section*{Introduction}

In a series of recent
papers~\cite{B-K-M-V-2014,B-K-M-V-2015,B-K-M-V-2018,H-K-M-V-jnmp} we studied
symmetry and integrability properties of the four linearly degenerate 3D
equations,~\cite{Fer-Moss-2015}. In particular, in~\cite{B-K-M-V-2015} we
described 2D reductions of the Pavlov, universal hierarchy and rdDym equations
that possess differential coverings of the rational form. In the recent
paper~\cite{H-K-M-V-jnmp}, using the reduction techniques, we showed that for
the two of these reductions (one of them being the Gibbons-Tsarev equation)
the Lie algebra of nonlocal symmetries is isomorphic to the Witt algebra
\begin{equation*}
  \mathfrak{w} = \sd{z^{i+1}\pd{}{z}}{i\in\mathbb{Z}}
\end{equation*}
of polynomial vector fields.

In the current paper, we prove similar results for the reduction
\begin{equation*}
  u_{yy} = u_y u_{xx} - (u_x + u)u_{xy} + u_x u_y 
\end{equation*}
of the universal hierarchy equation (Section~\ref{sec:univ-hier-equat}) and
for the equation
\begin{equation*}
  u_{yy} = (u_x + x)u_{xy} - (u_{xx} + 2)u_y,
\end{equation*}
which is a reduction of the rdDym equation
(Section~\ref{sec:rddym-equation}). Namely, we show that in the first case the
algebra of nonlocal symmetries is isomorphic
to~$\mathfrak{w} \oplus \mathfrak{s}_2$, where~$\mathfrak{s}_2$ is the
two-dimensional solvable Lie algebra
(Theorem~\ref{sec:thm-nonl-symm-reduct-uhe}), while in the second case this
algebra is~$\mathfrak{w} \oplus \mathfrak{a}_1$, where~$\mathfrak{a}_1$ is the
one-dimensional Abelian Lie algebra, see
Theorem~\ref{sec:thm-nonl-symm-reduct-rd}.

In Section~\ref{sec:diff-cover-nonl}, we introduce the necessary definitions
and constructions. Section~\ref{sec:discussion} contains a short discussion of
the obtained results.

\section{Differential coverings and nonlocal symmetries}
\label{sec:diff-cover-nonl}

Here we briefly discuss the necessary facts from nonlocal geometry of
PDEs. See details in~\cite{AMS-book,Kras-Vin-Trends-1989}.

Let~$\mathcal{E}\subset J^\infty(n,m)$ be an infinitely prolonged differential
equation (or a system of equations) in unknowns $u^j(x^1,\dots,x^n)$,
$j=1,\dots,m$, embedded to the corresponding infinite jet space. Denote
by~$u_\sigma^j$ jet coordinates and assume that~$\mathcal{E}$ is defined by a
system of relations $F^\alpha(x^1,\dots,x^n,\dots,u_\sigma^j,\dots) = 0$,
$j=1,\dots,l$. Denote by
\begin{equation*}
  D_i = \pd{}{x^i} + \sum u_{\sigma i}^j\pd{}{u_\sigma^j} 
\end{equation*}
the total derivatives on~$\mathcal{E}$. Let
\begin{equation*}
  \ell_{\mathcal{E}} =
  \begin{pmatrix}
    \sum_\sigma \pd{F^\alpha}{u_\sigma^j}D_\sigma
  \end{pmatrix}
\end{equation*}
be the linearization of~$\mathcal{E}$, where~$D_\sigma$ is the composition of
the total derivatives corresponding to the multi-index~$\sigma$.

A symmetry of~$\mathcal{E}$ is an evolutionary vector field
\begin{equation}\label{eq:3}
  \Ev_\phi = \sum D_\sigma(\phi^j)\pd{}{u_\sigma^j}
\end{equation}
such that~$\ell_{\mathcal{E}}(\phi) = 0$, where $\phi=(\phi^1,\dots,\phi^m)$
is a function on~$\mathcal{E}$ which is called the generating section of the
symmetry at hand. Symmetries form an $\mathbb{R}$-Lie algebra with respect to
the commutator. This algebra is denoted by~$\sym(\mathcal{E})$. The commutator
of symmetries induces the Jacobi bracket of their generating sections denoted
by~$\{\cdot\,,\cdot\}$.

A horizontal $(n-1)$-form
\begin{equation*}
  \omega = \sum A_i\,dx^1\wedge\dots\wedge\,dx^i\wedge\dots\wedge\,dx^n
\end{equation*}
is a conservation law of~$\mathcal{E}$ if it is closed with respect to the
horizontal de~Rham differential
\begin{equation*}
  d_h = \sum dx^i\wedge D_i.
\end{equation*}
A conservation law is trivial if~$\omega$ is an exact form. Two conservation
laws are equivalent if their difference is a trivial conservation law.

Consider another equation~$\tilde{\mathcal{E}}$ and a locally trivial bundle
$\tau\colon \tilde{\mathcal{E}} \to \mathcal{E}$. It is called a
(differential) covering if $\tau_*(\tilde{D}_i) = D_i$ for any total
derivative on~$\tilde{\mathcal{E}}$. Two
coverings~$\tau_l\colon \tilde{\mathcal{E}}^l \to \mathcal{E}$, $l=1$, $2$,
are equivalent if there exists a diffeomorphism
$F\colon \tilde{\mathcal{E}}^1 \to \mathcal{E}$ such that (1)
$\tau_2\circ F = \tau_1$ and (2)
$F_*(\tilde{D}_i^1) = \sum \mu_i^j \tilde{D}_j^2$, where~$\mu_i^j$ are smooth
functions on~$\tilde{\mathcal{E}}^2$ and~$\tilde{D}_i^l$ are the total
derivatives on~$\tilde{\mathcal{E}}^l$. Symmetries of~$\tilde{\mathcal{E}}$
are said to be nonlocal symmetries of~$\mathcal{E}$ and similar for
conservation laws.

Denote by~$\mathcal{F}$ and~$\tilde{\mathcal{F}}$ the rings of smooth
functions on~$\mathcal{E}$ and~$\tilde{\mathcal{E}}$, respectively. Then an
$\mathbb{R}$-linear derivation~$S\colon \mathcal{F}\to \tilde{\mathcal{F}}$ is
a nonlocal shadow if
\begin{equation*}
  \tilde{D}_i\circ S = S\circ D_i,\qquad i= 1,\dots, n.
\end{equation*}
In particular, local symmetries can be regarded as shadows in any covering.
We say that a shadow~$S$ lifts to~$\tau$ if there exists a nonlocal
symmetry~$\tilde{S}$ such that $\eval{\tilde{S}}_{\mathcal{F}} = S$. Lifts of
the trivial shadow~$S=0$ are called invisible symmetries.

Denote by~$\{w^\beta\}$ coordinates in fibers of~$\tau$. They are called
nonlocal variables. Using these variables, we can write the
fields~$\tilde{D}_i$ as
\begin{equation*}
  \tilde{D}_i = D_i + \sum X_i^\beta\pd{}{w^\beta},
\end{equation*}
where~$X_i^\beta$ are smooth functions on~$\tilde{\mathcal{E}}$, while the
fact that~$\tau$ is a covering amounts to the compatibility of the system
\begin{equation*}
  w_{x^i}^\beta = X_i^\beta
\end{equation*}
modulo~$\tilde{\mathcal{E}}$. Then nonlocal $\tau$-symmetries are vector
fields
\begin{equation*}
  \tilde{\Ev}_\phi + \sum \psi^\beta\pd{}{w^\beta},
\end{equation*}
where~$\tilde{\Ev}_\phi$ is obtained from~\eqref{eq:3} by changing~$D_i$
to~$\tilde{D}_i$ and $\phi=(\phi^1,\dots,\phi^m)$, $\psi^\beta$ are functions
on~$\tilde{\mathcal{E}}$ that enjoy the system
\begin{align}
  \label{eq:4}
  \tilde{\ell}_{\mathcal{E}}(\phi)&=0,\\
  \label{eq:5}
  \tilde{D}_i(\psi^\beta)&=\tilde{\ell}_{X_i^\beta}(\phi) + \sum_\gamma
                           \pd{X_i^\beta}{w^\gamma}\psi^\psi,
\end{align}
where ``tilde'' denotes the natural lift of a differential operator in total
derivatives from~$\mathcal{E}$ to~$\tilde{\mathcal{E}}$. To describe shadows,
one must consider Equation~\eqref{eq:4} only, while invisible symmetries are
described by the equation
\begin{equation*}
  \tilde{D}_i(\psi^\beta) = \sum_\gamma\pd{X_i^\beta}{w^\gamma}\psi^\psi.
\end{equation*}

Let
\begin{equation*}
  \omega =(X_1\,dx^1 + X_2\,dx^2)\wedge\,dx^3\wedge\dots\wedge\,dx^n
\end{equation*}
be a two-component conservation law of~$\mathcal{E}$. Then one can construct
the covering~$\tau_\omega$ with the nonlocal variables~$w^\sigma$,
where~$\sigma$ is a symmetrical multi-index containing the integers
$3,\dots,m$, and the defining equations
\begin{equation*}
  w_{x^1}^\sigma = \tilde{D}_\sigma(X_1),\quad w_{x^2}^\sigma =
  \tilde{D}_\sigma(X_2),\quad w_{x^i}^\sigma = w^{\sigma i},
\end{equation*}
for $i\geq 3$. This is the Abelian covering associated with~$\omega$; it is
one-dimensional for~$n=2$ and infinite-dimensional otherwise.

\section{The universal hierarchy equation}
\label{sec:univ-hier-equat}

The universal hierarchy equation is of the form
\begin{equation}
  \label{eq:1}
  u_{yy} = u_t u_{xy} - u_y u_{tx},
\end{equation}
see~\cite{MartinezAlonsoShabat2002,MartinezAlonsoShabat2004}.

\subsection{Lax pair and the associated covering}
\label{sec:lax-pair-associated-UHE}

Equation~\eqref{eq:1} admits the following Lax pair
\begin{equation*}
  \begin{array}{rcl}
  w_{t}&=&\lambda^{-2}(\lambda u_t-u_y)w_x,\\
  w_{y}&=&\lambda^{-1}u_yw_{x}.
  \end{array}
\end{equation*}
Expanding~$w$ in powers of~$\lambda$, $w = \sum_{i\in\mathbb{Z}}w_i\lambda^i$,
we obtain the infinite-dimensional covering
\begin{equation}\label{eq:8}
  \begin{array}{rcl}
    w_{i,t}&=&u_tw_{i+1,x}-u_yw_{i+2,x},\\
    w_{i,y}&=&u_yw_{i+1,x}
  \end{array}
\end{equation}
$i\in\mathbb{Z}$, with the additional variables~$w_i^{(j)}$ that satisfy the
relations $w_i^{(0)}=w_i$, $w_i^{(j+1)}=w_{i,x}^{(j)}$.

\subsection{Symmetries and reductions}
\label{sec:symm-reduct-UHE}

The space~$\sym(\mathcal{E})$ is spanned by the functions
$\theta_0(X)=Xu_x-X'u$, $\theta_1(X)=X$, $\phi_0(T)=Tu_t+T'yu_y$,
$\phi_1(T)=Tu_y$, $\upsilon=yu_y+u$, where~$X$ is a function in~$x$ and $T$ is
a function in~$t$, while `prime' denotes the corresponding derivatives.

\begin{lemma}
  \label{sec:lemma-symm-reduct-uhe}
  The symmetry~$\phi = \upsilon + \theta_0(1) + \phi_0(1)$ can be lifted to
  the covering~\eqref{eq:8}. 
\end{lemma}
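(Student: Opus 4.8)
The plan is to treat $\phi = \upsilon + \theta_0(1) + \phi_0(1) = yu_y + u + u_x + u_t$ as the generating section of the point symmetry $V = -\partial_x - \partial_t - y\,\partial_y + u\,\partial_u$, a combination of the scaling $-y\,\partial_y + u\,\partial_u$ with the two translations $-\partial_x$ and $-\partial_t$. Since local symmetries are shadows in any covering, it suffices to produce nonlocal components $\psi_i$ (together with their prolongations $\psi_i^{(j)} = \tilde{D}_x^j(\psi_i)$ along the tower $w_i^{(j)}$) solving the lifting equations~\eqref{eq:5}. First I would extend $V$ to the Lax pair by determining its action on the spectral parameter $\lambda$ and on $w$: a direct check shows that both Lax equations are preserved under $\lambda\mapsto e^{\epsilon}\lambda$, $w\mapsto e^{c\epsilon}w$ accompanying $x\mapsto x-\epsilon$, $t\mapsto t-\epsilon$, $y\mapsto e^{-\epsilon}y$, $u\mapsto e^{\epsilon}u$, with $c$ an arbitrary constant. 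Infinitesimally this reads $\delta\lambda = \lambda$, $\delta w = c\,w$.

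The evolutionary (characteristic) form of the lifted generator is then
\begin{equation*}
  \Psi = c\,w + w_x + w_t + y\,w_y - \lambda\,w_\lambda,
\end{equation*}
and substituting $w = \sum_{i}w_i\lambda^i$ and reading off the coefficient of $\lambda^i$ yields the candidate components
\begin{equation*}
  \psi_i = (c-i)\,w_i + w_i^{(1)} + (u_t + y\,u_y)\,w_{i+1}^{(1)} - u_y\,w_{i+2}^{(1)},
\end{equation*}
after using the covering relations $w_{i,t} = u_tw_{i+1}^{(1)} - u_yw_{i+2}^{(1)}$ and $w_{i,y} = u_yw_{i+1}^{(1)}$ from~\eqref{eq:8}. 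It then remains to verify that these $\psi_i$ satisfy~\eqref{eq:5}, i.e. that $\tilde{D}_t(\psi_i)$ and $\tilde{D}_y(\psi_i)$ agree with the action of the full field on $w_{i,t}$ and $w_{i,y}$; this is a direct computation whose only nontrivial inputs are the covering relations and the fact that $\phi$ solves the linearized equation $\tilde{\ell}_{\mathcal{E}}(\phi)=0$.

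I expect the main obstacle to be the first step — pinning down the action on $\lambda$ so that \emph{both} Lax equations stay invariant — since once $\delta\lambda = \lambda$ is established, the remaining verification is purely mechanical bookkeeping with the shifted indices $i+1$ and $i+2$. I would also record that the free constant $c$ reflects an invisible symmetry: taking $\phi = 0$ and $\psi_i = w_i$ for all $i$ solves~\eqref{eq:5}, because scaling all $w_i$ by a common factor preserves the linear covering~\eqref{eq:8}. Hence the lift is not unique but certainly exists, and one may fix $c = 0$ to single out a specific nonlocal symmetry extending $\phi$.
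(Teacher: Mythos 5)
Your proposal is correct and, once the dust settles, produces exactly the paper's lift: using the covering relations of~\eqref{eq:8} one has $w_{i,x}=u_y^{-1}w_{i-1,y}$, so the paper's components $\phi^i=(-i+1)w_i+yw_{i,y}+u_y^{-1}w_{i-1,y}+w_{i,t}$ coincide with your $\psi_i=(c-i)w_i+w_i^{(1)}+(u_t+yu_y)w_{i+1}^{(1)}-u_yw_{i+2}^{(1)}$ at $c=1$, and any other choice of $c$ differs by the invisible Euler scaling $\sum w_i^{(j)}\partial/\partial w_i^{(j)}$, which is indeed a symmetry because the covering is linear homogeneous in the $w$'s. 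The actual proof content is the same in both cases --- exhibit the components and verify~\eqref{eq:5} by direct computation, which neither you nor the paper writes out. What your route adds is a derivation of the ansatz: the paper simply posits $\phi^i$, whereas you obtain it by extending the point symmetry to the Lax pair via the spectral-parameter scaling $\delta\lambda=\lambda$, $\delta w=cw$ and reading off coefficients of $\lambda^i$; this explains where the index shift $(-i+1)w_i$ and the combination $w_{i,x}+w_{i,t}+yw_{i,y}$ come from, at the cost of the extra (easy) step of checking that both Lax equations are preserved. Both are valid; yours is more informative, the paper's is shorter.
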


\begin{proof}
  Denote the desired lift by
  \begin{equation*}
    \Phi = \tilde{\Ev}_\phi + \sum \phi^i\pd{}{w_i},
  \end{equation*}
  where~$\phi= yu_y + u + u_x + u_t$, and set
  \begin{equation*}
    \phi^i = (-i+1)w_i +yw_{i,y} + \frac{1}{u_y}w_{i-1,y} + w_{i,t}.
  \end{equation*}
  Then the result is obtained by the direct check.
\end{proof}

Due to Lemma~\ref{sec:lemma-symm-reduct-uhe}, we can consider the reduction of
Equation~\eqref{eq:1} together with its covering~\eqref{eq:8}. The resulting
objects will be the equation
\begin{equation}
  \label{eq:9}
  u_{yy} = u_y u_{xx} - (u_x + u)u_{xy} + u_x u_y 
\end{equation}
and the infinite-dimensional covering
\begin{equation}
  \label{eq:10}
  \begin{array}{rcl}
    q_{i,y}& = &(-i + 2)q_{i-1} - \dfrac{u_x + u}{u_y}q_{i-1,y} +
                 \dfrac{1}{u_y}q_{i-2,y} ,\\  
    q_{i,x} &=& \dfrac{q_{i-1,y}}{u_y}.
  \end{array}
\end{equation}
over this equation. Define the coverings~$\tau^p$ by setting~$q_i = 0$ for~$i<
p$, $p\in\mathbb{Z}$. Then, setting~$q_i^p=q_{p+i+1}$, we obtain~$q_{-1}=1$,
$q_0^p = -(p-1)y$ and
\begin{align*}
  \tau^p\colon\quad
  &q_{i,y}^p = (-p - i + 1)q_{i-1}^p - \frac{u_x + u}{u_y}
    q_{i-1,y}^p + \frac{1}{u_y}q_{i-2,y}^p,\\
  &q_{i,x}^p = \frac{q_{i-1,y}^p}{u_y}, 
\end{align*}
for $i\geq 1$. This is an infinite series of nonlocal conservation laws of
Equation~\eqref{eq:1}.

\begin{proposition}
  \label{sec:prop-symm-reduct-uhe}
  All the coverings~$\tau^p$ are pair-wise equivalent.
\end{proposition}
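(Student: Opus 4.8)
The plan is to use transitivity of the equivalence relation on coverings: since $\tau^p\cong\tau^{p''}$ and $\tau^{p''}\cong\tau^{p'}$ force $\tau^p\cong\tau^{p'}$, it suffices to construct an equivalence between the neighbouring coverings $\tau^p$ and $\tau^{p+1}$ for every $p\in\mathbb{Z}$. To organise the computation I would first pack the nonlocal variables of~\eqref{eq:10} into a single generating function $q(\lambda)=\sum_i q_i\lambda^i$. Writing $A=(u_x+u)/u_y$ and $B=1/u_y$, the defining relations of the reindexed covering $\tau^p$ collapse to the single pair
\[
  (1+A\lambda-B\lambda^2)\,q_y=(c-1)\lambda q-\lambda^2 q_\lambda,\qquad q_x=B\lambda\,q_y,
\]
with the discrete parameter $c=1-p$; here the truncation and normalisation defining $\tau^p$ amount to $q=\lambda^{-1}+O(1)$. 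In this language, passing from $\tau^p$ to $\tau^{p+1}$ is exactly lowering $c$ by one.

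The key observation I would try to establish is that this shift of $c$ is realised by multiplying the generating function by a \emph{universal} invertible series. Namely, I look for the equivalence in the form $q\mapsto f\,q$, where $f=1+f_1\lambda+f_2\lambda^2+\cdots$ has coefficients that are functions on the covering. Substituting $fq$ into the relation for $c-1$ and using that $q$ satisfies the relation for $c$, every term containing $q_\lambda$ and the entire $c$-dependence cancel, so that the requirement reduces to the single, \emph{$c$-independent} linear equation
\[
  (1+A\lambda-B\lambda^2)\,f_y+\lambda^2 f_\lambda+\lambda f=0,\qquad f_x=B\lambda\,f_y .
\]
Because this equation does not involve $c$, the same multiplier $f$ sends $\tau^p$ to $\tau^{p+1}$ for all $p$ at once, while its inverse $f^{-1}$ (again a series with leading term $1$) sends $\tau^{p+1}$ back to $\tau^p$; this is what will produce a genuine invertible equivalence rather than a mere morphism.

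Concretely I would solve the equation for $f$ recursively. Expanding in powers of $\lambda$ gives $f_0=\const$, which I normalise to $f_0=1$, and then
\[
  f_{k,y}=-k\,f_{k-1}-A\,f_{k-1,y}+B\,f_{k-2,y},\qquad f_{k,x}=B\,f_{k-1,y},\qquad k\geq1 .
\]
Unwinding the product $f\,q$ yields the explicit substitution $q_i^{p+1}=\sum_{k=0}^{i+1}f_k\,q_{i-k}^p$, which is lower-triangular in the index with unit diagonal (hence invertible, with a triangular inverse), and which automatically reproduces the seeds $q_{-1}^{p+1}=1$ and $q_0^{p+1}=-py$ once $f_1=-y$ is computed. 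Checking that this substitution intertwines the total derivatives of $\tau^p$ and $\tau^{p+1}$ is then a direct verification. The hard part, I expect, is not the $c$-independence of the equation for $f$ (a short calculation) but the fact that each step above requires an integration, so that every $f_k$ is a \emph{nonlocal} quantity: the delicate point is to show that these integrals are already available on $\tau^p$, i.e. that the $f_k$ can be expressed through the base jets and the nonlocal variables $q_j^p$, so that $q\mapsto fq$ is an honest diffeomorphism of total spaces and not merely a formal manipulation of power series.
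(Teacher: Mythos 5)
Your generating-function reformulation of~\eqref{eq:10} is correct, and so is the key cancellation: substituting $fq$ into the relation with parameter $c-1$ and using the relation with parameter $c$ for $q$ does kill both the $q_\lambda$-terms and the $c$-dependence, leaving exactly the system $(1+A\lambda-B\lambda^2)f_y+\lambda^2 f_\lambda+\lambda f=0$, $f_x=B\lambda f_y$ that you wrote, with the correct seeds $f_0=1$, $f_1=-y$. This is a genuinely different route from the paper, which does not pass through neighbouring coverings at all but maps every $\tau^p$ (for $p\neq 1$) directly onto $\tau^0$ by the explicit closed formula $q_i^p=-(p-1)(q_i^0-pd_i)$, where the $d_i$ are concrete polynomials in the $q_j^0$ built from the quantities $P_{i,j}^0$ generated by the operator $\mathcal{Y}$. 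However, the issue you flag in your last sentence and then leave open is not a residual technicality --- it is the entire mathematical content of the proposition. Each $f_k$ is defined only through an integration: you prescribe $f_{k,x}$ and $f_{k,y}$, i.e., you exhibit $f_k$ as a potential of a (nonlocal) conservation law of $\tau^p$. A priori such a potential is a \emph{new} nonlocal variable, in which case $q\mapsto fq$ is a morphism defined on an enlargement of $\tau^p$ and not an equivalence $\tau^p\cong\tau^{p+1}$. Nothing in your argument excludes this, so as written the proof is incomplete precisely at the point where the statement could fail.

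The gap is closable, but closing it essentially reproduces the paper's computation. Your recursion for the $f_k$ coincides with the defining recursion of $q_{k-1}^2$ (indeed $f_0=1=q_{-1}^2$ and $f_1=-y=q_0^2$), so $f=\lambda q^2(\lambda)$; to realize this series on $\tau^p$ one can observe that $h=\lambda^{\alpha-1}(q^p)^\alpha$ satisfies the relation with parameter $p'$ where $p'-1=\alpha(p-1)$, whence $f=(\lambda q^p)^{1/(p-1)}$, a binomial series in $\lambda q^p-1=q_0^p\lambda+q_1^p\lambda^2+\cdots$ whose coefficient of $\lambda^k$ is a polynomial in $q_0^p,\dots,q_{k-1}^p$. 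These polynomials are, in substance, the paper's $P_{i,j}^p$. Note also that this realization degenerates at $p=1$ --- exactly the case the paper has to treat separately by observing that $\tau^1$ coincides with $\tau^2$ --- so your claim that the \emph{same} multiplier handles all $p$ uniformly is true only of the abstract series $f$, not of its expression as a function on the source covering, which is what the equivalence actually requires.
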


Before proving the result, consider an auxiliary construction. Namely,
introduce the operator
\begin{equation*}
  \mathcal{Y}^p = \sum_{i\geq 0}(i+1)q_{i+1}^p\pd{}{q_i^p}
\end{equation*}
and define the quantities $P_{i,j}$ as follows:
\begin{equation}\label{eq:7}
  P_{i,0}^p = \frac{1}{(i + 2)!}(q_0^p)^{i+2},\quad
  P_{i,j}^p = \frac{1}{j}\mathcal{Y}(P_{i,j-1}^p),\qquad
  i = 0,1,\dots,\quad j =  1,2,\dots
\end{equation}
We also assume $P_{i,j}^p=0$ when at least one of the subscripts is negative.

\begin{proof}[Proof of Proposition~\ref{sec:prop-symm-reduct-uhe}]
  We shall prove that any~$\tau^p$ is equivalent to~$\tau^0$. Two cases are to
  be considered.
  
  \emph{Case $p\neq 1$}. Let us set
  \begin{equation*}
    d_i = \sum_{l=0}^\infty (-1)^l\frac{(p + l)!}{p!}P_{l,i-l-1}^0.
  \end{equation*}
  Then
  \begin{equation*}
    q_i^p = -(p - 1)(q_i^0 - pd_i), i \geq 1,
  \end{equation*}
  is the desired equivalence.
  
  \emph{Case $p = 1$}. This way of proof does not work for $p=1$, but from the
  defining equations one can easily see that the covering~$\tau^1$
  \emph{coincides} with~$\tau^2$.
\end{proof}

\subsection{Weights}
\label{sec:weights-UHE}

Let us assign to all the local and nonlocal variables the weights
\begin{equation*}
  \abs{x} = 0,\quad \abs{y} = 1,\quad \abs{u} = -1,\quad \abs{q_i^p} = i+1.
\end{equation*}
We also set~$\abs{u_x} = \abs{u} - \abs{x}$, $\abs{u_y} = \abs{u} - \abs{y}$,
etc., and assume that the weight of a monomial is the sum of weights of its
factor. The weight of a vector field~$Z\pd{}{z}$ is $\abs{Z} - \abs{z}$. Then
all the constructions under consideration become graded with respect to these
weights, while the results (provided they are polynomial) split into
homogeneous components.

\subsection{Nonlocal symmetries of reductions}
\label{sec:nonl-symm-reduct-UHE}

Let us use the notation
\begin{equation}\label{eq:11}
  \Phi = (\phi,\phi^{p,1},\dots,\phi^{p,i},\dots)
\end{equation}
for the vector field
\begin{equation*}
  S = \tilde{\Ev}_{\phi} + \sum \phi^{p,i}\pd{}{q_i^p}
\end{equation*}
on~$\tau^p$. Then~\eqref{eq:11} is a symmetry if and only if
\begin{equation*}
  \tilde{D}_y^2(\phi)
  = u_y\tilde{D}_x^2(\phi) -(u_x+u)\tilde{D}_x\tilde{D}_y(\phi)
  +(u_y-u_{xy})\tilde{D}_x(\phi) + (u_x+u_{xx})\tilde{D}_y(\phi) -
  u_{xy}\phi
\end{equation*}
and
\begin{equation}
  \label{eq:12}
  \begin{array}{rcl}
    \tilde{D}_y(\phi^{p,1}) &
                              =
    & (p-1)\tilde{\mathcal{L}}_1(\phi),\\
    \tilde{D}_x(\phi^{p,1}) &
                              =
    &  -(p-1)\tilde{\mathcal{L}}_2(\phi);\\[4pt]
    \tilde{D}_y(\phi^{p,2}) &
                              =
    & -(p+1)\phi^{p,1} - \dfrac{u_x+u}{u_y}
      \tilde{D}_y(\phi^{p,1}) -q_{1,y}^p\tilde{\mathcal{L}}_1(\phi) -
      (p-1)\tilde{\mathcal{L}}_2(\phi),\\
    \tilde{D}_x(\phi^{p,2}) &
                            =
    & \dfrac{1}{u_y}\tilde{D}_y(\phi^{p,1}) +
      q_{1,y}^p\tilde{\mathcal{L}}_2(\phi);\\[4pt]
    \tilde{D}_y(\phi^{p,i}) &
                            =
    & (-p-i+1)\phi^{p,i-1} - \dfrac{u_x +
      u}{u_y}\tilde{D}_y(\phi^{p,i-1}) + \dfrac{1}{u_y}\tilde{D}_y(\phi^{p,i-2})\\
                            &&- q_{i-1,y}^p\tilde{\mathcal{L}}_1(\phi) +
                               q_{i-2}^p\tilde{\mathcal{L}}_2(\phi),\\
    \tilde{D}_x(\phi^{p,i}) &
                            =
    & \dfrac{1}{u_y}\tilde{D}_y(\phi^{p,i-1})
      +q_{i-1,y}^p\tilde{\mathcal{L}}_2(\phi), 
  \end{array}
\end{equation}
for all~$i>2$, where
\begin{equation*}
  \tilde{\mathcal{L}}_1 = \frac{1}{u_y} + \frac{1}{u_y}\tilde{D}_x - \frac{u_x
  +u}{u_y^2}\tilde{D}_y,\qquad \tilde{\mathcal{L}}_2 =
-\frac{1}{u_y^2}\tilde{D}_y
\end{equation*}
are the linearizations of the functions~$(u_x+u)/u_y$ and~$1/u_y$,
respectively, lifted to~$\tau^p$.

Direct computations show that the functions
\begin{equation*}
  \phi_{-1} = u_y,\quad \phi_0 = yu_y + u,\quad \psi_0 = u_x,\quad \psi_1 = e^{-x}
\end{equation*}
constitute a basis of the space~$\sym(\mathcal{E})$. In addition, it can be
checked that the function
\begin{align*}
  \phi_2^p &= (2p^2y^2 + py^2 - 3y^2 - 4q_1^p)(u_x + u) - 3py  + 3y\\
  & + \frac{1}{3}
  (5p^3y^3 + 6p^2y^3 - 8py^3 - 3y^3 - 15pyq_1^p - 27yq_1^p - 15q_2^p)u_y
\end{align*}
is a shadow in the covering~$\tau^p$. Here the subscripts indicate the weight
of the corresponding symmetry.

\begin{lemma}
  \label{sec:lem-nonl-symm-reduct-uhe-loc}
  The local symmetries~$\phi_0$\textup{,} $\psi_0$\textup{,} and~$\psi_1$ can
  be lifted to any covering~$\tau^p$.
\end{lemma}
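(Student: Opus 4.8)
Since $\phi_0$, $\psi_0$ and $\psi_1$ belong to $\sym(\mathcal{E})$, they are local symmetries and hence automatically shadows in every covering $\tau^p$; in particular the first (linearized-equation) condition for a nonlocal symmetry holds with no further work. Thus the entire content of the lemma is that each of these three shadows admits nonlocal components $\phi^{p,i}$ solving the hierarchy~\eqref{eq:12}. The plan is to construct these components recursively in $i$ and to organise the computation by the weight grading of Section~\ref{sec:weights-UHE}: the evolutionary fields $\Ev_{\phi_0}$ and $\Ev_{\psi_0}$ have weight $0$ while $\Ev_{\psi_1}$ has weight $1$, so a homogeneous lift must satisfy $\abs{\phi^{p,i}} = w + (i+1)$ with $w\in\{0,1\}$. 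This pins down, at each level $i$, a finite list of admissible monomials in $y$, the jet variables and $q_0^p,\dots,q_{i}^p$, reducing the search for $\phi^{p,i}$ to a finite linear problem.

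For the base step $i=1$, the first pair of equations in~\eqref{eq:12} prescribes $\tilde{D}_y(\phi^{p,1}) = (p-1)\tilde{\mathcal{L}}_1(\phi)$ and $\tilde{D}_x(\phi^{p,1}) = -(p-1)\tilde{\mathcal{L}}_2(\phi)$, whose right-hand sides I would compute explicitly from the formulas for $\tilde{\mathcal{L}}_1$, $\tilde{\mathcal{L}}_2$ and then integrate to recover $\phi^{p,1}$, the integration constant being fixed by homogeneity. The essential point is consistency: one needs $\tilde{D}_x\tilde{D}_y(\phi^{p,1}) = \tilde{D}_y\tilde{D}_x(\phi^{p,1})$, and I expect this mixed-derivative identity to follow from the shadow condition on $\phi$ together with the covering relations~\eqref{eq:10}. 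Passing to $i>1$, the two equations of~\eqref{eq:12} express $\tilde{D}_y(\phi^{p,i})$ and $\tilde{D}_x(\phi^{p,i})$ entirely in terms of the already-constructed $\phi^{p,1},\dots,\phi^{p,i-1}$ and known jet and nonlocal quantities; I would again verify the compatibility of the prescribed $x$- and $y$-derivatives and integrate, the solvability at level $i$ being inherited from that at the lower levels.

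To keep this infinite tower under control I would seek a closed form for $\phi^{p,i}$ rather than integrate indefinitely. A natural guess is that the lifts are assembled from the quantities $P^p_{i,j}$ of~\eqref{eq:7}, which already package the recursive structure of $\tau^p$ through the operator $\mathcal{Y}^p$; with such an ansatz, verifying~\eqref{eq:12} collapses to a single identity per symmetry, exactly the ``direct check'' used elsewhere in this section. The three cases run in parallel, except that the lift of $\psi_1=e^{-x}$ will propagate the factor $e^{-x}$ through the whole hierarchy, so for this symmetry the ansatz should be $e^{-x}$ times a weight-$(i+2)$ polynomial in the $q^p_j$ and jet variables.

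The principal obstacle is precisely this reconstruction at every level: a shadow lifts if and only if the prescribed derivatives $\tilde{D}_x(\phi^{p,i})$ and $\tilde{D}_y(\phi^{p,i})$ are compatible for all $i$ simultaneously, and a priori the obstruction to solving~\eqref{eq:12} could survive at arbitrarily high $i$. The force of the lemma is that it vanishes for these three particular shadows. I expect the weight grading to be decisive here, since it renders each reconstruction finite-dimensional, forces the integration constants, and turns a potentially unbounded obstruction into an inductively solvable sequence of linear systems; the remaining labour is the bookkeeping needed to exhibit the closed-form components and confirm that the recursion closes uniformly in $i$ and $p$.
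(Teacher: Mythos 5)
Your proposal correctly isolates what has to be proved (the shadow condition is automatic for local symmetries, so only the existence of nonlocal components $\phi^{p,i}$ solving~\eqref{eq:12} is at issue) and correctly identifies the principal obstacle: the prescribed $x$- and $y$-derivatives must be compatible at every level $i$, and a priori this can fail --- which is exactly why other shadows, such as $\phi_{-1}$ and $\phi_2^p$, require separate treatment in particular coverings. But the proposal stops where the proof has to begin: you reduce each level to a finite linear system and then say you \emph{expect} the weight grading to force its solvability, without ever exhibiting a solution or showing that the mixed-derivative obstruction vanishes for all $i$ simultaneously. Homogeneity only bounds the space of candidate monomials; it does not by itself make the inhomogeneous linear system at level $i$ consistent. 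Since the entire content of the lemma is that this infinite tower of consistency conditions holds for these three particular shadows, what you have written is a search strategy rather than a proof.

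The paper closes this gap in the most direct way: it writes the components in closed form, namely $\phi_0^{p,i} = -(i+1)q_i^p + yq_{i,y}^p$ for $\phi_0 = yu_y+u$, $\psi_0^{p,i} = q_{i,x}^p$ for $\psi_0 = u_x$, and $\psi_1^{p,i} = 0$ for $\psi_1 = e^{-x}$, and then verifies~\eqref{eq:12} by substitution. Two of your guesses point away from this. First, the quantities $P_{i,j}^p$ of~\eqref{eq:7} are not needed for these three symmetries (they enter only in the explicit lifts of Subsection~\ref{sec:explicit-formulas-uhe}); the correct ansatz is the far simpler one built from $q_i^p$, $q_{i,x}^p$, $q_{i,y}^p$ alone. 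Second, the factor $e^{-x}$ does not propagate into the nonlocal components at all: both linearization operators annihilate it, since $\tilde{\mathcal{L}}_1(e^{-x}) = u_y^{-1}e^{-x} + u_y^{-1}\tilde{D}_x(e^{-x}) = 0$ and $\tilde{\mathcal{L}}_2(e^{-x}) = -u_y^{-2}\tilde{D}_y(e^{-x}) = 0$, so every right-hand side of~\eqref{eq:12} vanishes and the zero components suffice. Supplying these explicit formulas and the substitution check is what turns your outline into a proof; once they are in hand, the inductive framework becomes unnecessary.
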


\begin{proof}
  Let us set
  \begin{equation*}
    \begin{array}{lll}
      \phi_0^{p,i} = -(i+1)q_i^p + yq_{i,y}^p&\text{for}&\phi_0 = yu_y + u,\\
      \psi_0^p = q_{i,x}^p&\text{for}&\psi_0 = u_x,\\
      \psi_1^p = 0&\text{for}&\psi_1 = e^{-x}.
    \end{array}
  \end{equation*}
  Then it is an easy exercise to check that~\eqref{eq:12} fulfills. 
\end{proof}

\begin{lemma}
  \label{sec:lem-nonl-symm-reduct-uhe-nonloc}
  The symmetry~$\phi_{-1}$ can be lifted to the covering~$\tau^0$\textup{,}
  while the shadow~$\phi_2^3$ can be lifted to the covering~$\tau^3$.
\end{lemma}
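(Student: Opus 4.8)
The plan is to treat both assertions as instances of one lifting problem: given a shadow $\phi$ in the covering $\tau^p$, so that the linearization equation~\eqref{eq:4} (the first of the two symmetry conditions) already holds, produce nonlocal components $\phi^{p,i}$, $i\geq 1$, solving the recursive system~\eqref{eq:12}. The organizing principle is the weight grading of Section~\ref{sec:weights-UHE}: since $\phi^{p,i}$ is the coefficient of $\pd{}{q_i^p}$, homogeneity forces $\abs{\phi^{p,i}} = (i+1) + (\abs{\phi} - \abs{u})$, so at every level $\phi^{p,i}$ ranges over a finite-dimensional space of weight-homogeneous functions of $y$, the local jets, and $q_1^p,\dots,q_i^p$. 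I would proceed by induction on $i$: the two equations of~\eqref{eq:12} prescribe $\tilde{D}_x(\phi^{p,i})$ and $\tilde{D}_y(\phi^{p,i})$ in terms of $\phi$ and the already-constructed $\phi^{p,j}$, $j<i$, and the task at each level is to integrate this pair. The lift exists exactly when the two prescriptions are compatible, i.e.\ when $\tilde{D}_x\tilde{D}_y(\phi^{p,i}) = \tilde{D}_y\tilde{D}_x(\phi^{p,i})$; this compatibility is the sole obstruction.

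For $\phi_{-1} = u_y$, which generates translation in $y$, I expect the lift to $\tau^0$ to be given in closed form by $\phi^{0,i} = q_{i,y}^0$, consistent with the weight count $\abs{q_{i,y}^0} = (i+1) - \abs{y} = i$. The verification is a direct induction on the covering relations~\eqref{eq:10}: the $x$-equations of~\eqref{eq:12} follow from $\tilde{D}_x(q_{i,y}^0) = \tilde{D}_y(q_{i,x}^0)$ together with $q_{i,x}^0 = q_{i-1,y}^0/u_y$, while the $y$-equations follow by applying $\tilde{D}_y$ to the recursion for $q_{i,y}^0$ and using the identities $\tilde{D}_y\bigl((u_x+u)/u_y\bigr) = \tilde{\mathcal{L}}_1(u_y)$ and $\tilde{D}_y(1/u_y) = \tilde{\mathcal{L}}_2(u_y)$, which are precisely the linearizations occurring in~\eqref{eq:12}. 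The value $p=0$ enters through the initial data $q_{-1}^0 = 1$, $q_0^0 = y$: it is what lets the candidate close up without a correcting term at the first level, and once the base cases $i=1,2$ are checked the induction is routine.

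The case of $\phi_2^3$ is the substantive one. Here the shadow itself depends on $p$ and is genuinely nonlocal, so I do not expect a one-line formula; instead I would run the inductive construction above at $p=3$, solving at each level for the weight-$(i+3)$ component $\phi^{3,i}$ as a polynomial in $q_1^3,\dots,q_i^3$ and $y$ with local-jet coefficients. The operator $\mathcal{Y}^p$ and the quantities $P_{i,j}^p$ from~\eqref{eq:7}, introduced for Proposition~\ref{sec:prop-symm-reduct-uhe}, are the natural building blocks for writing these components compactly and for carrying the induction. The crux is that the compatibility obstruction above is a function of $p$ that vanishes identically only for the distinguished value $p=3$: for a generic $\tau^p$ the recursion stalls at some finite level, whereas at $p=3$ it continues indefinitely.

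The main obstacle, then, is neither the weight bookkeeping nor the base cases but the uniform control of $\phi^{3,i}$ over all $i$: one must show the resonance that generically blocks the lift is cancelled at $p=3$ at every step. I would isolate it by tracking the terms that carry the explicit $p$-dependent coefficients $(-p-i+1)$ in~\eqref{eq:12}, where any potential obstruction is concentrated, and by checking that its coefficient vanishes exactly when $p=3$; the remaining lower-weight pieces of $\phi^{3,i}$ are then pinned down uniquely by homogeneity and integrated with the help of the $P_{i,j}^3$.
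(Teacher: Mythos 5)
Your treatment of $\phi_{-1}$ is correct and coincides with the paper's: the lift to $\tau^0$ is $\phi_{-1}^{0,i}=q_{i,y}^0$, and your verification sketch (commuting $\tilde D_x$ and $\tilde D_y$, using $q_{i,x}^0=q_{i-1,y}^0/u_y$ and the identities $\tilde D_y\bigl((u_x+u)/u_y\bigr)=\tilde{\mathcal{L}}_1(u_y)$, $\tilde D_y(1/u_y)=\tilde{\mathcal{L}}_2(u_y)$) is exactly what a direct check of~\eqref{eq:12} amounts to.

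For $\phi_2^3$, however, there is a genuine gap: you describe a strategy but never execute the step on which the whole claim rests. Your plan is to solve~\eqref{eq:12} level by level and to show that ``the resonance that generically blocks the lift is cancelled at $p=3$ at every step,'' but this cancellation for \emph{all} $i$ is precisely the content of the lemma, and you only assert that you ``would isolate it by tracking the terms'' --- no argument is given that the obstruction vanishes beyond finitely many levels, and an unbounded sequence of compatibility checks cannot be discharged by inspection. The paper closes this gap by exhibiting a single closed-form expression, uniform in $i$,
\begin{align*}
  \phi_2^{3,i} &= 2(i+3)(2q_1^3-9y^2)q_i^3 - 6(i+4)yq_{i+1}^3 - 2(i+5)q_{i+2}^3
  + (54y^3-24yq_1^3-5q_2^3)q_{i,y}^3+2(9y^2-2q_1^3)q_{i,x}^3,
\end{align*}
for which~\eqref{eq:12} is verified identically; note that no $P_{i,j}^3$ enter at all, contrary to your expectation that the answer would not admit a one-line formula. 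A secondary weak point: you call the equality of mixed derivatives ``the sole obstruction,'' but closedness of the form $\tilde D_x(\phi^{p,i})\,dx+\tilde D_y(\phi^{p,i})\,dy$ only guarantees that $\phi^{p,i}$ exists as a \emph{new} nonlocal variable; for the shadow to lift within the given covering $\tau^p$ one also needs this closed form to be exact there, i.e.\ the associated conservation law to be trivial. Your appeal to finite-dimensionality of the weight-homogeneous component does not by itself settle either issue, whereas producing the explicit $\phi_2^{3,i}$ settles both at once.
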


\begin{proof}
  The lift of $\phi_{-1} = u_y$ is given by the formulas
  \begin{equation*}
    \phi_{-1}^{0,i} = q_{i,y}^0.
  \end{equation*}
  The lift of
  \begin{equation*}
    \phi_2^3 = (18y^2-4q_1^3)(u_x+u)+(54y^3-24yq_1^3-5q_2^3)u_y -6y
  \end{equation*}
  is given by
  \begin{align*}
    \phi_2^{3,i} &= 2(i+3)(2q_1^3-9y^2)q_i^3 -
                   6(i+4)yq_{i+1}^3 - 
                   2(i+5)q_{i+2}^3\\
                 &+ 
                   (54y^3-24yq_1^3-5q_2^3)q_{i,y}^3+2(9y^2-2q_1^3)q_{i,x}^3.
  \end{align*}
  Then~\eqref{eq:12} fulfills identically.
\end{proof}

\begin{lemma}
  \label{sec:nonl-symm-reduct-uhe-invis}
  The field $\Phi_{-2}^{-1} = \partial/\partial q_1^{-1}$ is an invisible
  symmetry in~$\tau^{-1}$.
\end{lemma}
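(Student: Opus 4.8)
The plan is to check directly that $\Phi_{-2}^{-1}$ solves the defining system of an invisible symmetry, the whole point being the distinguished value $p=-1$. Since an invisible symmetry is by definition a lift of the trivial shadow, I would start by setting $\phi=0$ in~\eqref{eq:12}, which annihilates every term containing $\tilde{\mathcal{L}}_1(\phi)$ or $\tilde{\mathcal{L}}_2(\phi)$. Writing $\psi^i$ for the component $\phi^{p,i}$, the system reduces to $\tilde{D}_y(\psi^1)=\tilde{D}_x(\psi^1)=0$ together with
\begin{equation*}
  \tilde{D}_y(\psi^2) = -(p+1)\psi^1 - \frac{u_x+u}{u_y}\tilde{D}_y(\psi^1),\qquad
  \tilde{D}_x(\psi^2) = \frac{1}{u_y}\tilde{D}_y(\psi^1),
\end{equation*}
and, for $i>2$,
\begin{equation*}
  \tilde{D}_y(\psi^i) = (-p-i+1)\psi^{i-1} - \frac{u_x+u}{u_y}\tilde{D}_y(\psi^{i-1}) + \frac{1}{u_y}\tilde{D}_y(\psi^{i-2}),\qquad
  \tilde{D}_x(\psi^i) = \frac{1}{u_y}\tilde{D}_y(\psi^{i-1}).
\end{equation*}

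Next I would set $p=-1$ and substitute the ansatz coming from $\Phi_{-2}^{-1}=\pd{}{q_1^{-1}}$, that is, $\psi^1=1$ and $\psi^i=0$ for all $i\geq 2$. The relations for $\psi^1$ hold because $\psi^1$ is constant. The decisive step is the level $i=2$: its right-hand side becomes $-(p+1)\psi^1$, and this vanishes \emph{precisely} because $p=-1$. This cancellation is the crux of the lemma and the one place where anything could go wrong: for any other integer $p$ a constant $\psi^1$ would force $\tilde{D}_y(\psi^2)=-(p+1)\neq 0$, ruling out $\psi^2=0$ and obliging the higher components to acquire nontrivial $y$-dependence. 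With $p=-1$ the right-hand side is zero, consistent with $\psi^2=0$.

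Finally I would close by induction on $i$. Granting $\psi^{i-1}=\psi^{i-2}=0$, every term on the right of the $i$-th pair vanishes, so $\tilde{D}_y(\psi^i)=\tilde{D}_x(\psi^i)=0$ is compatible with $\psi^i=0$; the base cases $i=2,3$ are handled by hand, the case $i=3$ additionally using $\tilde{D}_y(\psi^1)=0$ to kill the surviving $\tfrac{1}{u_y}\tilde{D}_y(\psi^1)$ term. This shows that~\eqref{eq:12} holds with $\phi=0$, so $\pd{}{q_1^{-1}}$ is a lift of the trivial shadow and hence an invisible symmetry of~$\tau^{-1}$. As a sanity check I would confirm the weight: $\abs{q_1^{-1}}=2$, so $\pd{}{q_1^{-1}}$ has weight $-2$, matching the subscript in~$\Phi_{-2}^{-1}$. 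Everything beyond the single cancellation at $i=2$ is routine.
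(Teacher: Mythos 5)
Your proposal is correct and matches the paper's argument, whose proof is literally the ``direct check'' that you carry out: setting $\phi=0$ in~\eqref{eq:12}, observing that the only potentially nonzero right-hand side is $-(p+1)\psi^{1}$ at level $i=2$, which vanishes exactly for $p=-1$, and closing by induction. The weight verification $\abs{q_1^{-1}}=2$ giving weight $-2$ is a correct sanity check consistent with the paper's labelling convention.
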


\begin{proof}
  Direct check.  
\end{proof}

\begin{corollary}
  \label{sec:cor-nonl-symm-reduct-uhe}
  There exist symmetries~$\Phi_{-2}^p$\textup{,} $\Phi_{-1}^p$\textup{,}
  $\Phi_0^p$\textup{,} $\Phi_2^p$\textup{,} $\Psi_0^p$\textup{,}
  and~$\Psi_1^p$ in any covering~$\tau^p$.
\end{corollary}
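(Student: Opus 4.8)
The plan is to combine the three lifting lemmas with the equivalence of all the coverings established in Proposition~\ref{sec:prop-symm-reduct-uhe}. The mechanism I would use is the standard fact that an equivalence of coverings is a fiber-preserving diffeomorphism over~$\mathcal{E}$ that intertwines the total derivatives; its push-forward therefore carries nonlocal symmetries of one covering isomorphically onto those of the other. Since the equivalence is the identity on the base~$\mathcal{E}$ (the condition $\tau_2\circ F=\tau_1$), it preserves the generating section, so it sends lifts to lifts and, in particular, invisible symmetries to invisible symmetries.

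First I would dispose of the three symmetries that require no transport: by Lemma~\ref{sec:lem-nonl-symm-reduct-uhe-loc} the local symmetries $\phi_0$, $\psi_0$, $\psi_1$ lift to every $\tau^p$, giving $\Phi_0^p$, $\Psi_0^p$, $\Psi_1^p$ directly. The remaining three I would obtain by transporting along the equivalences $\tau^p\cong\tau^0$ furnished by Proposition~\ref{sec:prop-symm-reduct-uhe} (for $p=1$ one routes through the coincidence $\tau^1=\tau^2$). Concretely, $\Phi_{-1}^0$ exists in $\tau^0$ and $\Phi_2^3$ in $\tau^3$ by Lemma~\ref{sec:lem-nonl-symm-reduct-uhe-nonloc}, while the invisible $\Phi_{-2}^{-1}$ exists in $\tau^{-1}$ by Lemma~\ref{sec:nonl-symm-reduct-uhe-invis}; pushing these forward along $\tau^0\cong\tau^p$, $\tau^3\cong\tau^p$, and $\tau^{-1}\cong\tau^p$ respectively produces $\Phi_{-1}^p$, $\Phi_2^p$, $\Phi_{-2}^p$ in every $\tau^p$.

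The step requiring genuine care, and which I expect to be the main obstacle, is checking that the transported fields carry the asserted weight labels (and, for $\Phi_2^p$, the asserted shape). For this I would verify that the change of variables in Proposition~\ref{sec:prop-symm-reduct-uhe} is homogeneous of weight zero: since $\abs{q_i^p}=\abs{q_i^0}=i+1$ and, by~\eqref{eq:7}, the operator $\mathcal{Y}$ raises weight by one while $\abs{P_{i,0}^p}=i+2$, every correction $d_i$ has weight $i+1$, so the relation $q_i^p=-(p-1)(q_i^0-pd_i)$ is weight-preserving. Granting this, the push-forward respects the grading, so $\Phi_{-2}^p$, $\Phi_{-1}^p$, $\Phi_2^p$ indeed land in weights $-2$, $-1$, $2$; the weight-$2$ shadow $\phi_2^3$ is then forced, up to lower-weight contributions, onto a representative of $\phi_2^p$, which is exactly the bookkeeping that needs to be confirmed rather than merely asserted.
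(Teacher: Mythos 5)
Your proposal is correct and follows essentially the same route as the paper: the paper's own proof is a one-line appeal to Proposition~\ref{sec:prop-symm-reduct-uhe} together with Lemmas~\ref{sec:lem-nonl-symm-reduct-uhe-loc}--\ref{sec:nonl-symm-reduct-uhe-invis}, i.e.\ exactly the transport-along-equivalences mechanism you describe. Your additional checks (handling $p=1$ via $\tau^1=\tau^2$ and verifying that the equivalence is weight-homogeneous) are sound elaborations of details the paper leaves implicit.
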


\begin{proof}
  The fact follows immediately from Proposition~\ref{sec:prop-symm-reduct-uhe}
  and
  Lemmas~\ref{sec:lem-nonl-symm-reduct-uhe-loc}--\ref{sec:nonl-symm-reduct-uhe-invis}. 
\end{proof}

\begin{theorem}
  \label{sec:thm-nonl-symm-reduct-uhe}
  The Lie algebra of nonlocal symmetries for Equation~\eqref{eq:9} in~$\tau^p$
  is isomorphic to the direct sum
  \begin{equation*}
    \mathfrak{w} \oplus \mathfrak{s}_2,
  \end{equation*}
  where~$\mathfrak{w}$ is the Witt algebra and~$\mathfrak{s}_2$ is the
  two-dimensional solvable algebra.
\end{theorem}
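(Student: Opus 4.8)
The plan is to take the six symmetries supplied by Corollary~\ref{sec:cor-nonl-symm-reduct-uhe} as generators and to show, by computing their commutators (Jacobi brackets), that they close into a Lie algebra isomorphic to $\mathfrak{w}\oplus\mathfrak{s}_2$. By Proposition~\ref{sec:prop-symm-reduct-uhe} all the coverings~$\tau^p$ are equivalent, so it suffices to carry out the computation in one convenient covering, after which the isomorphism type is transported to every~$\tau^p$. Throughout, the grading by weight from Section~\ref{sec:weights-UHE} is the organizing tool: the bracket of two homogeneous nonlocal symmetries is again homogeneous with weight equal to the sum of the weights, so each commutator is forced to land in a low-dimensional graded component, which pins it down up to a scalar.

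First I would treat the $\mathfrak{w}$-part. The symmetries $\Phi_{-2}^p,\Phi_{-1}^p,\Phi_0^p,\Phi_2^p$ have weights $-2,-1,0,2$, and by the weight count their pairwise brackets must be proportional to the symmetry of the summed weight. The key observation is that $[\Phi_{-1}^p,\Phi_2^p]$ has weight~$1$ and is a nonzero multiple of a new symmetry $\Phi_1^p$, filling the gap at weight~$1$. Bracketing repeatedly with $\Phi_{-1}^p$ lowers the weight by one at each step (the structure constant in $[\Phi_{-1}^p,\Phi_k^p]\propto\Phi_{k-1}^p$ vanishing only at $k=-1$, which is never reached when one descends from $\Phi_{-2}^p$), while bracketing with $\Phi_1^p$ and $\Phi_2^p$ raises it; in this way one produces homogeneous symmetries $\Phi_i^p$ for every $i\in\mathbb{Z}$. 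After a suitable rescaling one checks that $[\Phi_i^p,\Phi_j^p]=(j-i)\Phi_{i+j}^p$, which is exactly the commutation rule of the Witt algebra~$\mathfrak{w}$; the essential point is that each structure constant is nonzero for $i\neq j$, so the subalgebra generated is genuinely~$\mathfrak{w}$ and not a degenerate quotient.

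Next I would identify the $\mathfrak{s}_2$-part by computing the single bracket $[\Psi_0^p,\Psi_1^p]$. Since $\Psi_0^p$ has weight~$0$ and $\Psi_1^p$ has weight~$1$, the result has weight~$1$ and must be a multiple of $\Psi_1^p$; a direct evaluation gives $[\Psi_0^p,\Psi_1^p]=\Psi_1^p$ up to normalization, so $\Psi_0^p,\Psi_1^p$ span the two-dimensional non-Abelian solvable Lie algebra~$\mathfrak{s}_2$. To see that the total algebra is the \emph{direct} sum, I would verify that each $\Psi$ commutes with the whole Witt part: it is enough to check $[\Phi_{-1}^p,\Psi_j^p]=[\Phi_2^p,\Psi_j^p]=0$ for $j=0,1$, because once the $\Psi$'s commute with the seeds $\Phi_{-1}^p$ and $\Phi_2^p$ they automatically commute with every $\Phi_i^p$ built from these by the Jacobi identity.

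The main obstacle is \emph{completeness}: proving that the symmetries above exhaust all nonlocal symmetries in~$\tau^p$, so that the algebra is exactly $\mathfrak{w}\oplus\mathfrak{s}_2$ and not a proper subalgebra of something larger. Here I would again exploit the grading. For a fixed weight~$k$ one solves the determining system~\eqref{eq:4}--\eqref{eq:5}, equivalently~\eqref{eq:12}, order by order: a homogeneous shadow of weight~$k$ is a polynomial of bounded degree in the graded variables, so the shadow equation~\eqref{eq:4} reduces to a finite linear system whose solution space can be computed directly, and one then shows that each admissible shadow lifts. The expected outcome is that every weight $k\notin\{0,1\}$ contributes precisely the one $\Phi_k^p$, while the weights~$0$ and~$1$ each contribute one extra symmetry accounting for $\Psi_0^p$ and $\Psi_1^p$. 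Establishing these dimension counts uniformly in~$k$—rather than verifying finitely many cases—is the delicate step, and it is what upgrades the generation argument into the stated isomorphism.
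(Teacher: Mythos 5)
Your proposal follows essentially the same route as the paper: reduce to a single convenient covering via Proposition~\ref{sec:prop-symm-reduct-uhe}, generate the Witt algebra by iterated Jacobi brackets starting from the seed symmetries supplied by Corollary~\ref{sec:cor-nonl-symm-reduct-uhe} (with the weight grading pinning down each bracket), and then check that $\Psi_0^p,\Psi_1^p$ span an $\mathfrak{s}_2$ commuting with the Witt part. Two small remarks: the weight-$0$ and weight-$1$ graded components are two-dimensional, so the correct Witt generator at weight $0$ is the combination $-\Phi_0^0-\Psi_0^0$ rather than a multiple of $\Phi_0^0$ alone (your direct-sum check would force exactly this adjustment); and your final completeness step --- solving the determining equations weight by weight to show that no further symmetries exist --- is a point the paper's own proof leaves implicit, so carrying it out would make the argument strictly more complete than the original.
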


\begin{proof}
  Since all the coverings~$\tau^p$ are pair-wise equivalent
  (Proposition~\ref{sec:prop-symm-reduct-uhe}), we can accomplish the proof in
  any of them. From the technical viewpoint,~$\tau^0$ is the most convenient
  one.
  
  Consider the transformation
  \begin{equation*}
    \tilde{\Phi}_0^0 = -\Phi_0^0-\Psi_0^0,\quad \tilde{\Phi}_{-1}^0 =
    -\Phi_{-1}^0,\quad \tilde{\Psi}_0^0 = \Psi_0^0,\quad \tilde{\Psi}_1^0 =
    \Psi_1^0. 
  \end{equation*}
  Let us
  set~$\tilde{\Phi}_1^0 = \frac{1}{3}\{\tilde{\Phi}_{-1}^0,\tilde{\Phi}_2^0\}$
  and by induction
  \begin{equation*}
    \tilde{\Phi}_{-k-1}^0 =-\frac{1}{k-1}\{\tilde{\Phi}_{-1}^0,
    \tilde{\Phi}_{-k}^0\},\qquad \tilde{\Phi}_{k+1}^0 
    = \frac{1}{k-1}\{\tilde{\Phi}_1^0,\tilde{\Phi}_k^0\}
  \end{equation*}
  for all~$k\geq 2$. Then
  \begin{equation*}
    \{\tilde{\Phi}_k^0,\tilde{\Phi}_l^0\} =(l-k)\tilde{\Phi}_{k+l}^0
  \end{equation*}
  for all~$k$, $l\in\mathbb{Z}$ and the functions~$\tilde{\Phi}_k^0$ span the
  algebra~$\mathfrak{w}$. On the other
  hand,~$\{\tilde{\Psi}_0^0,\tilde{\Psi}_1^0\} = \tilde{\Psi}_1^0$
  and~$\{\tilde{\Psi}_i^0,\tilde{\Phi}_k^0\} = 0$ for~$i=0$, $1$
  and~$k\in\mathbb{Z}$.
\end{proof}

\subsection{Explicit formulas}
\label{sec:explicit-formulas-uhe}

To conclude the discussion of the universal hierarchy equation, we present
explicit formulas for the lifts of symmetries~$\Phi_{-2}^p$, $\Phi_{-1}^p$,
$\Phi_1^p$, and~$\Phi_2^p$ to an arbitrary covering~$\tau^p$, $p\neq 1$:
\begin{align*}
  \phi_{-2}^p &= 0,\\
  \phi_{-2}^{p,1}&=1,\\
  \phi_{-2}^{p,i}
  &= \frac{p + 1}{p - 1} \left(q_{i-2} +\sum_{j=0}^{i-2} \left(\frac{-1}{p -
    1}\right)^{j+1} P_{j,i-j-3}^p \prod_{k=0}^j\big(-2 + k(p - 1)\big)\right);
\\[4pt]
  \phi_{-1}^p
  &= u_y,\\
  \phi_{-1}^{p,i}
  &=  q_{i,y}^p + pq_{i-1}^p + p\sum_{j=0}^{i-2}
    \left(\frac{-1}{p - 1}\right)^{j+1} P_{j,i-2-j}^p \prod_{k=0}^j\big(-1 +
    k(p - 1)\big);\\[4pt]
  \phi_1^p
  &=  (p-1)yu_x+\frac{1}{4}((p-1)(3p+2)y^2-6q^p_1)u_y+(p-1)\left(
    yu-\frac{1}{2} \right),\\ 
  \phi_1^{p,i}
  &= (p-1)yq^p_{i,x}+\frac{1}{4}\left((p-1)(3p+2)y^2-6q^p_{1}\right)q^p_{i,y}-\frac{1}{2}(3+i)(p-1)q^p_{i+1}\\
&+(p+i)q^p_0q^p_i -
  \frac{1}{2}(p-2)
  \sum_{j=0}^{i}{\left(\frac{-1}{p-1}\right)^j
  P_{j,i-j}^p\prod_{k=0}^{j}{[1+k(p-1)]}};\\[4pt]    
  \phi_2^p
  &=  \big((2p^2 + p - 3)y^2 - 4q_1^p\big)u_x + \frac{1}{3}\big((5p^3 +
    6p^2 - 8p - 3)y^3 - (15p - 27)yq_1^p - 15q_2^p\big)u_y \\
              &+ (2p^2y^2 + py^2 - 3y^2 - 4q_1^p)u - 3(p - 1)y,\\
  \phi_2^{p,i}
              &= \big((2p^2 + p - 3)y^2 - 4q_1^p\big)q_{i,x}^p + \frac{1}{3}
                \big((5p^3 + 6p^2 - 8p - 3)y^3 - 15(p - 27)yq_1^p -
                15q_2^p\big)q_{i,y}^p\\
              &- (p - 1)(2p + 3)(p + i)y^2q_i^p - 3(p - 1)(p + i + 1)yq_{i+1}^p
                + 4(p + i)q_1^pq_i^p - (5 + i)(p - 1)q_{i+2}^p \\
  &- (p - 3)\sum_{j=0}^{i+1}\left(\frac{-1}{p - 1}\right)^jP_{j,i-j+1}^p
    \prod_{k=0}^j\big(2 + k(p - 1)\big),
\end{align*}
where the quantities~$P_{i,j}^p$ are described by Equations~\eqref{eq:7}.

\section{The rdDym equation}
\label{sec:rddym-equation}

The 3D rdDym equation reads
\begin{equation}
  \label{eq:2}
  u_{ty} = u_x u_{xy} - u_y u_{xx},
\end{equation}
see~\cite{Blaszak,Ovsienko2010,Pavlov2006}.

\subsection{Lax pairs and associated coverings}
\label{sec:lax-pairs-associated-rd}

The following system
\begin{equation*}
  \begin{array}{rcl}
    w_t &=& (u_x - \lambda)w_x,\\
    w_y &=& \lambda^{-1} u_y w_x
  \end{array}
\end{equation*}
is a Lax pair for Equation~\eqref{eq:2}. As above, we consider the expansion
$w= \sum_{i\in\mathbb{Z}} w_i\lambda^i$ and obtain the covering
\begin{equation}\label{eq:13}
  \begin{array}{rcl}
    w_{i,t} &=& u_x w_{i,x} - w_{i-1,x},\\
    w_{i,y} &=& u_y w_{i+1,x}
  \end{array}
\end{equation}
$i\in\mathbb{Z}$, endowed with the additional nonlocal variables~$w_i^{(j)}$
defines by the relations~$w_i^{(0)} = w_i$, $w_i^{(j+1)} = w_{i,x}^{(j)}$.

\subsection{Symmetries and reductions}
\label{sec:symm-reduct-rd}

The space~$\sym(\mathcal{E})$ for Equation~\eqref{eq:2} is spanned by the
functions $\psi_0 = xu_x - 2u$, $\upsilon_0(Y) = Yu_y$,
$\theta_0(T) = Tu_t + T'(xu_x - u) + \frac{1}{2}T''x^2$,
$\theta_{-1}(T) = Tu_x + T'x$, $\theta_{-2}(T) = T$, where~$T = T(t)$,
$Y = Y(y)$, and the `prime' denotes the derivative with respect to~$t$.

\begin{lemma}
  \label{sec:lem-symm-reduct-rd}
  The symmetry~$\phi = \theta_0(1) - \upsilon_0(1) + \psi_0$ can be lifted to
  the covering~\eqref{eq:13}.
\end{lemma}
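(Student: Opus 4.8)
The approach I would take mirrors the proof of Lemma~\ref{sec:lemma-symm-reduct-uhe}: produce an explicit lift and confirm it by direct substitution. Writing the three generators out, $\theta_0(1) = u_t$, $\upsilon_0(1) = u_y$, and $\psi_0 = xu_x - 2u$, the section to be lifted is
\begin{equation*}
  \phi = u_t - u_y + xu_x - 2u,
\end{equation*}
a symmetry of Equation~\eqref{eq:2} because $\sym(\mathcal{E})$ is a linear space. The lift has the form $\tilde{\Ev}_{\phi} + \sum_i\phi^i\,\pd{}{w_i}$, and by~\eqref{eq:4}--\eqref{eq:5} the components~$\phi^i$ must satisfy the system obtained by linearizing the relations~\eqref{eq:13} along~$\phi$, namely
\begin{align*}
  \tilde{D}_t(\phi^i) &= u_x\tilde{D}_x(\phi^i) - \tilde{D}_x(\phi^{i-1}) + w_{i,x}\tilde{D}_x(\phi),\\
  \tilde{D}_y(\phi^i) &= u_y\tilde{D}_x(\phi^{i+1}) + w_{i+1,x}\tilde{D}_y(\phi),
\end{align*}
where the terms $w_{i,x}\tilde{D}_x(\phi)$ and $w_{i+1,x}\tilde{D}_y(\phi)$ come from the linearizations of the coefficients $u_x$ and $u_y$ in~\eqref{eq:13}.

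To guess the~$\phi^i$ I would exploit that this system is linear in~$(\phi,\{\phi^i\})$, so it suffices to lift each summand of~$\phi$ separately. The terms~$u_t$ and~$-u_y$ are the evolutionary forms of~$\partial_t$ and~$-\partial_y$; since the relations~\eqref{eq:13} carry no explicit~$t$ or~$y$, these translations should lift with components~$w_{i,t}$ and~$-w_{i,y}$, understood as functions on the covering via~\eqref{eq:13}. The term~$\psi_0$ is the evolutionary form of the scaling~$-x\,\pd{}{x} - 2u\,\pd{}{u}$; requiring its flow to preserve the Lax pair forces~$\lambda\mapsto e^{-\epsilon}\lambda$, whence (normalising~$w$ to be invariant)~$w_i$ carries weight~$i$ and the scaling contributes~$iw_i + xw_{i,x}$. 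Collecting the three pieces I would set
\begin{equation*}
  \phi^i = w_{i,t} - w_{i,y} + xw_{i,x} + iw_i.
\end{equation*}

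It then remains to verify the two displayed identities for this ansatz, eliminating all~$t$- and~$y$-derivatives of the~$w_j$ by means of~\eqref{eq:13}. I expect this to be routine rather than delicate: the translational parts lift essentially because~$\tilde{D}_t$ and~$\tilde{D}_y$ commute through the covering relations, so the only substantive point is the coefficient of~$w_i$ attached to the scaling. The main obstacle is to get the~$i$-dependence of this coefficient right: matching the coefficient of~$u_yw_{i+1,x}$ in the second identity forces the recursion~$c_{i+1} = c_i + 1$, and the first identity yields the same relation, so the slope is pinned to~$1$ and an error here would leave an uncancelled multiple of~$u_yw_{i+1,x}$. The additive constant, by contrast, is \emph{not} fixed by the symmetry conditions — adjusting it merely adjoins the invisible symmetry~$\sum_i w_i\,\pd{}{w_i}$ — and it is precisely the weight normalisation ``$w$ invariant'' from the Lax-pair computation that selects the clean value~$c_i = i$. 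Once the two identities hold for all~$i\in\mathbb{Z}$, the resulting field is the sought lift.
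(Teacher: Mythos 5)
Your proof is correct and follows the same route as the paper: exhibit explicit components $\phi^i$ and confirm the linearized covering equations by direct substitution. One point worth flagging: your formula $\phi^i = w_{i,t} - w_{i,y} + xw_{i,x} + iw_i$ differs from the one printed in the paper, $\phi^i = w_{i,t} - w_{i,y} - xw_i^{(1)} - (i+2)w_i$, by more than a multiple of the invisible symmetry $\sum_i w_i\,\pd{}{w_i}$ (only the additive constant in the coefficient of $w_i$ is free, as you observe), and a direct check of both the $t$- and $y$-components — together with consistency with the reduced covering~\eqref{eq:15} — supports your version, so the paper's displayed formula appears to contain sign typos.
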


\begin{proof}
  Let~$\phi^i$ denote the coefficient at~$\pd{}{w_i}$. Then
  \begin{equation*}
    \phi^i = w_{i,t} -w_{i,y} -xw_i^{(1)} -(i+2)w_i
  \end{equation*}
  delivers the desired lift.
\end{proof}

The reduction with respect to the obtained lift leads to the equation
\begin{equation}
  \label{eq:14}
  u_{yy} = (u_x + x)u_{xy} - (u_{xx} + 2)u_y 
\end{equation}
and the covering
\begin{equation}
  \label{eq:15}
  \begin{array}{rcl}
    r_{i,x} &=& (u_x + x)r_{i-1,x} - r_{i-1,y} - (i + 1)r_{i-1},\\
    r_{i,y} &=&u_y r_{i-1,x}
  \end{array}
\end{equation}
over~\eqref{eq:14}. Similar to Subsection~\ref{sec:symm-reduct-UHE}, we fix an
integer~$p$ and `cut' this covering at level~$p$, i.e., set~$r_i = 0$ for
all~$i<p$. Then, after relabeling~$r_{p+i}\mapsto r_{i-2}^p$, we obtain that
\begin{equation*}
  r_{-2}^p = 1,\quad r_{-1}^p = -(p + 2)x,\quad r_0^p = -(p + 2)u +
  \frac{1}{2}(p + 2)^2x^2
\end{equation*}
and arrive to the coverings
\begin{align*}
  \rho^p\colon\quad& r_{i,x}^p = (u_x + x)r_{i-1,x}^p - r_{i-1,y}^p - (p + i +
                3)r_{i-1}^p,\\ 
              &r_{i,y}^p = u_y r_{i-1,x}^p,
\end{align*}
$i\geq1$. These are nonlocal conservation laws of Equation~\eqref{eq:14}.

\begin{proposition}
  \label{sec:prop-symm-reduct-rd}
  All the coverings~$\rho^p$ are pair-wise equivalent.
\end{proposition}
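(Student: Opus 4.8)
The plan is to mimic the strategy used for Proposition~\ref{sec:prop-symm-reduct-uhe}, namely to show that every covering $\rho^p$ is equivalent to a single fixed one (say $\rho^0$) by exhibiting an explicit fibrewise change of the nonlocal variables $r_i^p$ that intertwines the defining equations. The equivalence data consists of smooth functions $\mu_i^j$ on the target covering such that $F_*(\tilde D_i^{\rho^p}) = \sum_j \mu_i^j \tilde D_j^{\rho^0}$, so concretely I would look for a diffeomorphism of the form $r_i^p = \sum_{l} c_{i,l}^{p}\, R_l$, where the $R_l$ are built from the $r_l^0$ together with auxiliary homogeneous quantities analogous to the $P_{i,j}^p$ of~\eqref{eq:7}, and the coefficients $c_{i,l}^p$ depend only on $p$, $i$, and $y$ (and possibly $x$). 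The compatibility of such a substitution with the two families of relations $r_{i,x}^p = (u_x+x)r_{i-1,x}^p - r_{i-1,y}^p - (p+i+3)r_{i-1}^p$ and $r_{i,y}^p = u_y r_{i-1,x}^p$ is what must be checked.

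First I would fix the seed data $r_{-2}^p = 1$, $r_{-1}^p = -(p+2)x$, $r_0^p = -(p+2)u + \tfrac12(p+2)^2 x^2$ and observe how the shift $p \mapsto 0$ rescales these lowest terms; this dictates the overall normalising factor in the equivalence (compare the prefactor $-(p-1)$ appearing in the universal-hierarchy case). Second, I would introduce the grading by weights, as in Subsection~\ref{sec:weights-UHE}, assigning each $r_i^p$ a weight so that the defining equations of $\rho^p$ become homogeneous; this forces the ansatz $r_i^p = \sum_l c_{i,l}^p R_l$ to respect the grading and drastically cuts down the admissible terms in each $c_{i,l}^p$. Third, I would derive a recursion for the correction terms by substituting the ansatz into the $x$-relation and the $y$-relation and matching coefficients, which should produce closed-form expressions for the $c_{i,l}^p$ in terms of products of linear factors in $p$ (the analogues of $\prod_k(\,\cdot\,)$ in~\eqref{eq:7}), together with a vanishing condition that pins down the auxiliary quantities. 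Finally, a direct homogeneity-by-homogeneity check confirms that both defining relations are preserved, so $F$ is the required equivalence.

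The main obstacle I anticipate is the same degenerate value of $p$ that broke the argument before: the normalising factor analogous to $-(p-1)$ will vanish for one exceptional $p$ (here presumably $p = -2$, since $-(p+2)$ is the factor multiplying the seed data), so the explicit substitution will have a pole or become identically trivial there. For that exceptional value I would argue separately, showing — as in the case $p=1$ of Proposition~\ref{sec:prop-symm-reduct-uhe}, where $\tau^1$ literally coincided with $\tau^2$ — that the exceptional covering either coincides with an adjacent one or is linked to it by a trivial relabelling, so that transitivity of equivalence still yields the claim for all $p$ simultaneously. A secondary technical difficulty is guessing the correct auxiliary quantities replacing the $P_{i,j}^p$; I expect these to satisfy a recursion generated by an operator of the form $\sum_i a_i\, r_{i+1}^p\,\partial/\partial r_i^p$ analogous to $\mathcal{Y}^p$, and verifying that this recursion closes is where most of the bookkeeping will lie.
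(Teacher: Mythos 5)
Your plan coincides with the paper's proof in every structural respect: an explicit fibrewise substitution with overall factor $-(p+2)$, auxiliary quantities generated recursively by an operator $\mathcal{Z}^p=\sum_i(i+2)r_{i+1}^p\,\partial/\partial r_i^p$ (the analogue of $\mathcal{Y}^p$), and a separate treatment of the exceptional value $p=-2$, which you correctly predict and resolve exactly as the paper does, by observing that $\rho^{-2}$ coincides with $\rho^{-1}$. The only differences are cosmetic --- the paper anchors the equivalences at $\rho^{-3}$ rather than $\rho^{0}$ and writes the substitution in the closed form $r_i^p=-(p+2)\bigl(r_i^{-3}-(p+3)d_i\bigr)$ with $d_i$ built from the $Q_{i,j}^p$ of~\eqref{eq:16} --- so what remains of your outline is only the bookkeeping you already identified.
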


\begin{proof}
  The proof is very similar to that of
  Proposition~\ref{sec:prop-symm-reduct-uhe}. Two cases must be considered.

  \emph{Case~$p\neq-2$}. Consider the vector field
  \begin{equation*}
    \mathcal{Z}^p = \sum_{i=-1}^\infty(i+2)r_{i+1}^p\pd{}{r_i^p}
  \end{equation*}
  and define the quantities~$Q_{i,j}$ by
  \begin{equation}
    \label{eq:16}
    Q_{i,0}^p = \frac{1}{(i + 2)!}(r_{-1}^p)^{(i+2)},\quad
    Q_{i,j}^p =\frac{1}{j}\mathcal{Z}(Q_{i,j-1}^p),\qquad i = 0,1,\dots,\quad j =
    1,2,\dots
  \end{equation}
  and formally set~$Q_{i,j}^p=0$ when at least one of the subscripts is
  negative.  Let
  \begin{equation*}
    d_i = \sum_{k=0}^\infty (-1)^i\frac{(p + k + 3)!}{(p + 3)!} Q_{i,i-k}^{-3}.
  \end{equation*}
  Then
  \begin{equation*}
    r_i^p = -(p + 2)\big(r_i^{-3} - (p + 3)d_i\big)
  \end{equation*}
  is an equivalence between~$\rho^p$ and~$\rho^{-3}$.

  \emph{Case~$p=-2$}. It is easily seen that~$\rho^{-2}$ coincides
  with~$\rho^{-1}$.
\end{proof}

\subsection{Weights}
\label{sec:weights-rd}

The basic weights assigned in this case are
\begin{equation*}
  \abs{x} = 1,\quad \abs{y} =0,\quad \abs{u} = 2
\end{equation*}
with the same rules that were described in Subsection~\ref{sec:weights-UHE}.

\subsection{Nonlocal symmetries of reductions}
\label{sec:nonl-symm-reduct-rd}

Note first that
\begin{equation*}
  \Phi = (\phi,\phi^{p,1},\dots,\phi^{p,i},\dots)
\end{equation*}
is a symmetry in~$\rho^p$ if and only if
\begin{equation*}
  \tilde{D}_y^2(\phi) = (u_x + x)\tilde{D}_x\tilde{D}_y(\phi) -
  u_y\tilde{D}_x^2(\phi) + u_{xy}\tilde{D}_x(\phi) - (u_{xx} + 2)\tilde{D}_y(\phi) 
\end{equation*}
and
\begin{equation}
  \label{eq:17}
  \begin{array}{rcl}
    \tilde{D}_x(\phi^{p,1})&=&(p+2)\big(((p+2)x - 2u_x)\tilde{D}_x(\phi) +
                               \tilde{D}_y(\phi) (p+4)\phi\big),\\
    \tilde{D}_y(\phi^{p,1})&=& (p+2)\big(((p+2)x - u_x)\tilde{D}_y(\phi) -
                               u_y\tilde{D}_x(\phi)\big);\\[4pt] 
    \tilde{D}_x(\phi^{p,i})&=& (u_x + x)\tilde{D}_x(\phi^{p,i-1}) -
                               \tilde{D}_y(\phi^{p,i1})  - (p + i +
                               3)\phi^{p,i-1} + r_{i-1,x}^p\tilde{D}_x(\phi),\\
    \tilde{D}_y(\phi^{p,i})&=&u_y\tilde{D}_x(\phi^{p,i-1}) +
                               r_{i-1,x}^p\tilde{D}_y(\phi), 
  \end{array}
\end{equation}
where~$i>1$.

A basis of $\sym(\mathcal{E})$ is formed by the functions
\begin{equation*}
  \phi_{-2} = 1,\quad \phi_{-1} =u_x +x,\quad \phi_0 = 2u-xu_x,\quad \psi_0=u_y,
\end{equation*}
where subscripts coincide with weights. In addition, in any covering~$\rho^p$
there exists a shadow of the form
\begin{align*}
  \phi_2^p
  &= 6r_2^p + \left(
    (p + 2)
    \left(
    (5p + 24)xu - \frac{1}{6} (5p^2 + 20p - 18)x^3
    \right)
    - 5r_1^p
    \right)u_x\\
  &- (p + 2)(4u + 5x^2)u_y + (6p + 25)xr_1^p\\
    &+ (p + 2)\left(
    (3p + 16)u^2 - (p + 4)(3p + 13)x^2u +
    \frac{1}{12} (9p^3 + 80p^2 + 212p + 168)x^4
    \right).
\end{align*}

\begin{lemma}
  \label{sec:nonl-symm-reduct-rd-loc}
  The symmetries~$\psi_0$ and~$\phi_0$ are lifted to any covering~$\rho^p$.
\end{lemma}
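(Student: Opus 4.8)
The plan is to produce, for each of the two generating functions, explicit formulas for the nonlocal components of the lifted field and then to verify the defining system~\eqref{eq:17} by direct substitution, exactly as was done for the universal hierarchy equation in Lemma~\ref{sec:lem-nonl-symm-reduct-uhe-loc}. The point is that both $\psi_0 = u_y$ and $\phi_0 = 2u - xu_x$ are \emph{geometric} symmetries of Equation~\eqref{eq:14}: $\psi_0$ is the evolutionary form of the $y$-translation, while $\phi_0$ is the evolutionary form of the scaling that assigns the weights $\abs{x} = 1$, $\abs{y} = 0$, $\abs{u} = 2$ fixed in Subsection~\ref{sec:weights-rd}. Since the covering relations~\eqref{eq:15} contain no explicit $y$ and are homogeneous of weight $i+1$ with respect to these weights (so that $\abs{r_i^p} = i+2$), both symmetries respect the covering, and I expect their lifts to be obtained simply by taking the evolutionary form on the nonlocal variables as well.

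Concretely, I would set
\begin{equation*}
  \psi_0^{p,i} = r_{i,y}^p, \qquad
  \phi_0^{p,i} = (i+2)r_i^p - x\,r_{i,x}^p,
\end{equation*}
the first being the transport of $r_i^p$ along $\partial/\partial y$, and the second the evolutionary representative of the weight-$(i+2)$ scaling of $r_i^p$ corrected by the $x$-transport term $x\,r_{i,x}^p$. As an immediate consistency check I would evaluate the scaling formula on the seeds $r_{-2}^p = 1$, $r_{-1}^p = -(p+2)x$, and $r_0^p$: it gives $0$, $0$, and $-(p+2)\phi_0$, respectively. The first two vanish because the seeds are constants in $y$ and linear-homogeneous in the base variables, and the last equals $-(p+2)\phi_0$ precisely because $u$ enters $r_0^p$ linearly and $\tilde{\Ev}_{\phi_0}(u) = \phi_0$; this is exactly what the seed structure forces.

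With the formulas in hand, the verification reduces to the two equations of~\eqref{eq:17}. The base relation holds automatically because $\psi_0,\phi_0 \in \sym(\mathcal{E})$, so only the nonlocal part remains. The $i = 1$ equations are special — they carry the overall factor $(p+2)$ and couple to the seeds $r_{-1}^p$, $r_0^p$ — so I would check them separately, rewriting $\tilde{D}_x$ and $\tilde{D}_y$ of $\psi_0^{p,1}$ and $\phi_0^{p,1}$ by means of~\eqref{eq:15} and matching the right-hand sides. For $i > 1$ I would argue by induction: assuming the identities for indices up to $i-1$, I would expand $\tilde{D}_x(\phi^{p,i})$ and $\tilde{D}_y(\phi^{p,i})$ using~\eqref{eq:15}, commute the total derivatives past the explicit $x$ and past $r_{i,x}^p$, and collect terms, whereupon the recursive shape of~\eqref{eq:17} matches the recursion built into the formulas.

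The main obstacle I anticipate is purely a bookkeeping cancellation in the scaling case: the extra transport term $x\,r_{i,x}^p$ generates cross terms when commuted with $\tilde{D}_x$ (via $[\tilde{D}_x, x\,\cdot\,] = 1$) and with the factor $(u_x + x)$ in~\eqref{eq:15}, and these must cancel against the $r_{i-1,x}^p\,\tilde{D}_x(\phi)$ contribution together with the weight shift from $(i+2)$ to $(i+1)$ inherited by $r_{i,x}^p$. Tracking the homogeneity grading throughout should make this cancellation transparent and ensure that no residual terms survive, so that~\eqref{eq:17} is satisfied identically for every $p$ and every $i \geq 1$.
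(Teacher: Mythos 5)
Your proposal is correct and follows essentially the same route as the paper: the lift formulas you write down, $\psi_0^{p,i}=r_{i,y}^p$ and $\phi_0^{p,i}=(i+2)r_i^p-x\,r_{i,x}^p$, are exactly those in the paper's proof, which likewise concludes by a direct check of Equations~\eqref{eq:17}. Your additional remarks on the weight grading and the seed values are a sound (and correct) consistency check but do not change the argument.
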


\begin{proof}
  It is sufficient to set
  \begin{equation*}
    \begin{array}{lll}
      \psi_0^{p,i}=r_{i,y}^p&\text{for}&\psi_0=u_y,\\
      \phi_0^{p,i}=-xr_{i,x}^p + (i+2)r_i^p&\text{for}&\phi_0=2u-xu_x
    \end{array}
  \end{equation*}
  and check that Equations~\eqref{eq:17} fulfill.
\end{proof}

\begin{lemma}
  \label{sec:nonl-symm-reduct-rd-nloc}
  The symmetries~$\phi_{-1}$ and~$\phi_{-2}$ are lifted to the
  coverings~$\rho^{-3}$ and~$\rho^{-4}$\textup{,} respectively\textup{,} while
  the shadow~$\phi_2^0$ lifts to~$\rho^0$.
\end{lemma}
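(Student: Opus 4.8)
The plan is to prove each of the three assertions in the same spirit as Lemma~\ref{sec:lem-nonl-symm-reduct-uhe-nonloc}: exhibit explicit nonlocal components $\phi^{p,i}$ and check that the overdetermined system~\eqref{eq:17} holds identically modulo Equation~\eqref{eq:14}. The conceptual point in every case is that the value of $p$ to which the symmetry or shadow lifts is forced by the vanishing of a single obstruction term in~\eqref{eq:17}; once that term vanishes, the remaining relations close order by order, and the grading by weights ($\abs{x}=1$, $\abs{y}=0$, $\abs{u}=2$, $\abs{r_i^p}=i+2$) guarantees that each level involves only finitely many monomials.

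I would begin with $\phi_{-1}=u_x+x$, claiming the lift to $\rho^{-3}$. The key observation is that at $p=-3$ the seed of the covering coincides with the symmetry: $r_0^{-3}=u+\tfrac12 x^2$, so $r_{0,x}^{-3}=u_x+x=\phi_{-1}$. This motivates the ansatz $\phi_{-1}^{-3,i}=r_{i,x}^{-3}$ (consistent at $i=0$, where $\mathbf{E}_{\phi_{-1}}(r_0^{-3})=\phi_{-1}=r_{0,x}^{-3}$). To verify the $y$-relation of~\eqref{eq:17} I would use that total derivatives commute, $\tilde{D}_y(r_{i,x}^{-3})=\tilde{D}_x(r_{i,y}^{-3})=\tilde{D}_x(u_y r_{i-1,x}^{-3})$, and match against $u_y\tilde{D}_x(r_{i-1,x}^{-3})+r_{i-1,x}^{-3}\tilde{D}_y(\phi_{-1})$ using $\tilde{D}_y(\phi_{-1})=u_{xy}$. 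For the $x$-relation I would apply $\tilde{D}_x$ to the defining equation $r_{i,x}^{-3}=(u_x+x)r_{i-1,x}^{-3}-r_{i-1,y}^{-3}-i\,r_{i-1}^{-3}$ of $\rho^{-3}$ and again cancel $\tilde{D}_x(r_{i-1,y}^{-3})$ against $\tilde{D}_y(r_{i-1,x}^{-3})$; everything closes because the index shift by one is exactly compensated by $\tilde{D}_x(\phi_{-1})=u_{xx}+1$ appearing in the inhomogeneous term.

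For $\phi_{-2}=1$ the lift to $\rho^{-4}$ comes essentially for free. Here $\tilde{D}_x(\phi_{-2})=\tilde{D}_y(\phi_{-2})=0$, so every right-hand side in~\eqref{eq:17} built from derivatives of $\phi$ vanishes, and the only surviving term is the one in the first relation proportional to $(p+2)(p+4)\phi$. This coefficient vanishes precisely at $p=-4$ (and $p=-2$), so the trivial tail $\phi_{-2}^{-4,i}=0$ satisfies all of~\eqref{eq:17}; an easy induction shows each $\phi_{-2}^{-4,i}$ is then forced to be constant, and homogeneity (the component at level $i$ must have weight $i$) forces it to be zero. Thus the selection of $p=-4$ is exactly the statement that the obstruction $(p+2)(p+4)$ is resonant there.

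The main obstacle is the genuinely nonlocal lift of the shadow $\phi_2^0$ to $\rho^0$, which admits no such short closed form. My plan mirrors the explicit universal-hierarchy formulas: posit a graded ansatz for $\phi_2^{0,i}$ consisting of the derivative terms $r_{i,x}^0$ and $r_{i,y}^0$ carrying the same coefficients that $u_x$ and $u_y$ carry in $\phi_2^0$, together with lower-order nonlocal products $r_{i+2}^0$, $r_{i+1}^0$, $r_i^0$, $r_1^0 r_i^0$ with polynomial coefficients in $x$ and $u$, and a sum over the auxiliary quantities $Q_{j,i-j+1}^0$ generated from $\mathcal{Z}^0$ through~\eqref{eq:16}. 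Substituting into~\eqref{eq:17} yields, level by level, a linear recursion for the unknown coefficients; I would pin down the leading coefficients from the orders $i=1,2$ and then prove by induction that the recursion closes, the inductive step reducing to an identity among the $Q_{j,k}^0$ that follows directly from their defining relation $Q_{i,j}^0=\tfrac1j\mathcal{Z}(Q_{i,j-1}^0)$. The delicate part, where I expect the real work to lie, is the bookkeeping that absorbs the inhomogeneous terms $r_{i-1,x}^0\tilde{D}_x(\phi_2^0)$ and $r_{i-1,x}^0\tilde{D}_y(\phi_2^0)$ consistently; the fact that the tower closes only at $p=0$ rather than for neighbouring $p$ should emerge as a divisibility/resonance condition on these coefficients, playing the role that $p=3$ played in Lemma~\ref{sec:lem-nonl-symm-reduct-uhe-nonloc}.
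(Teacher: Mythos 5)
Your treatment of $\phi_{-1}$ and $\phi_{-2}$ is correct and is essentially the paper's own proof: the lifts $\phi_{-1}^{-3,i}=r_{i,x}^{-3}$ and $\phi_{-2}^{-4,i}=0$ are exactly the formulas given there, and your explanation of \emph{why} these values of $p$ are selected (the seed identity $r_{0,x}^{-3}=u_x+x=\Ev_{\phi_{-1}}(r_0^{-3})$, respectively the vanishing of the obstruction proportional to $(p+2)(p+4)$) is a useful addition that the paper leaves implicit.

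The gap is in the third assertion. The paper's proof that $\phi_2^0$ lifts to $\rho^0$ consists precisely of exhibiting the components
\begin{align*}
\phi_2^{0,i} &= (48xu + 6x^3 - 5r_1^0)\,r_{i,x}^0 - 2(4u + 5x^2)\,r_{i,y}^0 + 2(i+6)\,r_{i+2}^0 - 2(36+9i)\,x^2 r_i^0\\
&\quad + 3(5+i)\,r_{-1}^0 r_{i+1}^0 + 4(i+4)\,r_0^0 r_i^0
\end{align*}
and checking~\eqref{eq:17} directly; you only describe an ansatz and a recursion that would, in principle, produce such a formula, without carrying it out. Since the lemma is a purely computational existence statement, stopping at the plan leaves this claim unproved. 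Two further points about the plan itself. First, the correct ansatz at $p=0$ needs no $Q$-terms at all: in the general formula of Subsection~\ref{sec:explicit-formulas-rd} the $Q$-sum in $\phi_2^{p,i}$ carries the overall factor $-p$ and therefore drops out at $p=0$; the nonlocal tail is built from $r_{i+2}^0$, $r_{-1}^0r_{i+1}^0$, $r_0^0r_i^0$ and $x^2r_i^0$ (note that $r_{-1}^0$ and $r_0^0$ are local functions, so these are really local coefficients times single nonlocal variables), not from $r_1^0r_i^0$ products as you suggest. Second, your expectation that the tower closes ``only at $p=0$'' is not borne out: the same subsection lifts $\phi_2^p$ to every $\rho^p$, and $p=0$ is merely the value at which the lift degenerates to a $Q$-free closed form. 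This does not affect the truth of the lemma, which only asserts the $p=0$ case, but it means the ``resonance'' mechanism you propose to exploit is not the reason the construction works.
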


\begin{proof}
  We set
  \begin{equation*}
    \begin{array}{lll}
      \phi_{-1}^{-3,i} = r_{i,x}^{-3}
      &\text{for}
      &\phi_{-1}=u_x+x,\\
      \phi_{-2}^{-4,i} = 0
      &\text{for}
      &\phi_{-2}=1
    \end{array}
  \end{equation*}
  and also
  \begin{align*}
    \phi_2^{0,i} &= (48xu + 6x^3 - 5r_1^0)r_{i, x}^0
                   - 2(4u + 5x^2)r_{i, y}^0 +
                   2(i+6)r_{i+2}^0 - 2(36+9i)x^2r_i^0 \\
                 &+ 3(5+i)r_{-1}^0r_{i+1}^0 +
                   4(i+4)r_0^0r_i^0
  \end{align*}
  for
  \begin{equation*}
        \phi_2^0 = 6 r_2^0+(48 x u+6 x^3-5 r_1^0)u_x
        - 2(4u+5
        x^2)u_y + 25xr_1^0+32 u^2
        -104 x^2 u+28 x^4.
  \end{equation*}
  Then Equations~\eqref{eq:17} are satisfied for the corresponding values
  of~$p$.
\end{proof}

\begin{corollary}
  All the symmetries~$\Psi_0^p$\textup{,} $\Phi_{-2}^p$\textup{,}
  $\Phi_{-1}^p$\textup{,} $\Phi_0^p$\textup{,} $\Phi_2^2$ exist in
  any~$\rho^p$.
\end{corollary}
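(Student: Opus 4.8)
The plan is to assemble the corollary from the three preceding results, with no fresh computation, exactly as was done for the universal hierarchy in Corollary~\ref{sec:cor-nonl-symm-reduct-uhe}. First I would dispose of the two fields that are available in \emph{every} covering at once. By Lemma~\ref{sec:nonl-symm-reduct-rd-loc} the explicit components $\psi_0^{p,i} = r_{i,y}^p$ and $\phi_0^{p,i} = -xr_{i,x}^p + (i+2)r_i^p$ solve the defining system~\eqref{eq:17} for all $p$ simultaneously, so $\Psi_0^p$ and $\Phi_0^p$ are present in each $\rho^p$ directly, and nothing further is required for them.

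The three remaining fields are, by Lemma~\ref{sec:nonl-symm-reduct-rd-nloc}, produced only in distinguished coverings: $\Phi_{-1}^p$ in $\rho^{-3}$, $\Phi_{-2}^p$ in $\rho^{-4}$, and the lift $\Phi_2^p$ of the shadow $\phi_2^0$ in $\rho^0$. To place each of them in an arbitrary $\rho^p$ I would invoke Proposition~\ref{sec:prop-symm-reduct-rd}, which asserts that all the $\rho^p$ are pairwise equivalent. Concretely, the change of nonlocal variables $r_i^p = -(p+2)\big(r_i^{-3} - (p+3)d_i\big)$ exhibited there realizes a covering equivalence between $\rho^p$ and $\rho^{-3}$ for every $p \neq -2$, while $\rho^{-2}$ literally coincides with $\rho^{-1}$; chaining these identifications shows that any two coverings in the family are equivalent. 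Since each of the indices $-3$, $-4$, $0$ differs from $-2$, every $\rho^p$ is in particular equivalent to each of $\rho^{-3}$, $\rho^{-4}$, and $\rho^0$.

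The one conceptual point to record is that a covering equivalence carries nonlocal symmetries to nonlocal symmetries. This is standard: the equivalence diffeomorphism conjugates the total-derivative fields (condition (2) of the definition of equivalence in Section~\ref{sec:diff-cover-nonl}), hence it transports the symmetry-defining system~\eqref{eq:17} from one covering to the other, so its pushforward sends a solution $\Phi$ to a solution. Applying this to the lifts of Lemma~\ref{sec:nonl-symm-reduct-rd-nloc} yields $\Phi_{-1}^p$, $\Phi_{-2}^p$, and $\Phi_2^p$ in the given $\rho^p$, which together with $\Psi_0^p$ and $\Phi_0^p$ completes the required list. I expect the only genuine obstacle to be purely organizational, namely verifying that the three distinguished indices indeed fall under the generic case $p \neq -2$ of Proposition~\ref{sec:prop-symm-reduct-rd} and applying each equivalence in the correct direction; once that bookkeeping is done, the statement reduces to a one-sentence citation of Proposition~\ref{sec:prop-symm-reduct-rd} and Lemmas~\ref{sec:nonl-symm-reduct-rd-loc}--\ref{sec:nonl-symm-reduct-rd-nloc}.
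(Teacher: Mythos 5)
Your proposal is correct and follows exactly the paper's route: the paper's own proof is a one-line citation of Proposition~\ref{sec:prop-symm-reduct-rd} together with Lemmas~\ref{sec:nonl-symm-reduct-rd-loc} and~\ref{sec:nonl-symm-reduct-rd-nloc}, which is precisely the assembly you describe. Your additional remarks---that equivalences of coverings transport nonlocal symmetries and that the distinguished indices $-3$, $-4$, $0$ all fall under the generic case of the proposition---are correct elaborations of what the paper leaves implicit.
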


\begin{proof}
  It immediately follows from Proposition~\ref{sec:prop-symm-reduct-rd} and
  Lemmas~\ref{sec:nonl-symm-reduct-rd-loc}
  and~\ref{sec:nonl-symm-reduct-rd-nloc}.
\end{proof}

We can describe the algebra of nonlocal symmetries for Equation~\eqref{eq:14}
now:

\begin{theorem}\label{sec:thm-nonl-symm-reduct-rd}
  The algebra of nonlocal symmetries of Equation~\eqref{eq:14} in any
  covering~$\rho^p$ is isomorphic to
  \begin{equation*}
    \mathfrak{w} \oplus \mathfrak{a}_1,
  \end{equation*}
  where~$\mathfrak{w}$ is the Witt algebra and~$\mathfrak{a}_1$ is the
  one-dimensional Abelian Lie algebra.
\end{theorem}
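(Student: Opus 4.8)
The plan is to follow the proof of Theorem~\ref{sec:thm-nonl-symm-reduct-uhe} almost verbatim, exploiting the close parallel between the two reductions. First I would invoke Proposition~\ref{sec:prop-symm-reduct-rd}: since all the coverings~$\rho^p$ are pair-wise equivalent, the isomorphism type of the nonlocal symmetry algebra is independent of~$p$, so it suffices to work in a single convenient covering, say~$\rho^{-3}$. By Lemmas~\ref{sec:nonl-symm-reduct-rd-loc} and~\ref{sec:nonl-symm-reduct-rd-nloc} and the corollary that follows Lemma~\ref{sec:nonl-symm-reduct-rd-nloc}, the symmetries~$\Psi_0^p$, $\Phi_{-2}^p$, $\Phi_{-1}^p$, $\Phi_0^p$, and~$\Phi_2^p$ are all available there. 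The subscripts record the weights of Subsection~\ref{sec:weights-rd}: using~$\abs{x}=1$, $\abs{y}=0$, $\abs{u}=2$ one checks that the generating sections~$1$, $u_x+x$, $2u-xu_x$, $u_y$ carry symmetry weights~$-2,-1,0,0$, so the~$\Phi$'s sit in weights~$-2,-1,0,2$ while~$\Psi_0$ sits in weight~$0$. Thus weight~$0$ is the only one carrying two independent generators, and this is precisely where the summand~$\mathfrak{a}_1$ will come from.

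Next I would reorganize these generators into a Witt basis plus one extra element. As in Theorem~\ref{sec:thm-nonl-symm-reduct-uhe}, I would pass to suitable linear combinations, taking~$L_0$ to be an appropriate combination of~$\Phi_0$ and~$\Psi_0$ and renaming the remaining low-weight symmetries as~$L_{-2}$, $L_{-1}$, $L_2$ (up to sign and scaling). The missing generator is produced by a bracket, $L_1=\tfrac13\{L_{-1},L_2\}$, the factor~$3=2-(-1)$ being forced by the target relation~$\{L_k,L_l\}=(l-k)L_{k+l}$ together with the additivity of weights under the Jacobi bracket. The whole tower is then defined inductively for~$k\geq2$ by $L_{-k-1}=-\tfrac{1}{k-1}\{L_{-1},L_{-k}\}$ and $L_{k+1}=\tfrac{1}{k-1}\{L_1,L_k\}$, exactly as in the universal hierarchy case; here~$L_{-2}$ is genuinely needed as a seed for the downward recursion, and it is supplied by~$\Phi_{-2}$.

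I would then verify the Witt relations~$\{L_k,L_l\}=(l-k)L_{k+l}$ for all~$k,l\in\mathbb{Z}$. The relations among the finitely many seed generators are a graded computation from the determining system~\eqref{eq:17}, every bracket of homogeneous symmetries being homogeneous of the summed weight; the remaining relations follow by induction via the Jacobi identity, which propagates the structure constants from the~$\mathfrak{sl}_2$-part~$\{L_{-1},L_0,L_1\}$ to all of~$\mathfrak{w}$. In parallel I would show that~$\Psi_0$, possibly after subtracting its~$\Phi_0$-component so that it becomes the genuine complement of~$L_0$ at weight~$0$, satisfies~$\{\Psi_0,L_k\}=0$ for every~$k$. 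Since~$\{\Psi_0,\Psi_0\}=0$ trivially, its one-dimensional span is an Abelian ideal commuting with the entire Witt part, which gives the direct sum~$\mathfrak{w}\oplus\mathfrak{a}_1$.

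The main obstacle is not the bracket bookkeeping but \emph{completeness}: one must argue that the listed generators exhaust the nonlocal symmetry algebra, i.e.\ that no further independent symmetry hides in some weight. Here the grading of Subsection~\ref{sec:weights-rd} is the decisive tool — solving~\eqref{eq:17} weight by weight should show that each homogeneous component is spanned by the symmetries already produced, with weight~$0$ two-dimensional (namely~$L_0$ and the~$\mathfrak{a}_1$-generator) and every other weight one-dimensional (namely~$L_k$), which is exactly the dimension count demanded by~$\mathfrak{w}\oplus\mathfrak{a}_1$. A secondary technical point is confirming that the two recursions and the seed~$L_1=\tfrac13\{L_{-1},L_2\}$ are mutually consistent and do not overdetermine the~$L_k$; this is the standard bootstrap reconstructing~$\mathfrak{w}$ from its~$\mathfrak{sl}_2$ subalgebra and goes through as in Theorem~\ref{sec:thm-nonl-symm-reduct-uhe}.
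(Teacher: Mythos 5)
Your proposal is correct and takes essentially the same route as the paper: reduce to a single convenient covering via Proposition~\ref{sec:prop-symm-reduct-rd}, build the Witt tower from the seed symmetries by the same bracket recursion as in Theorem~\ref{sec:thm-nonl-symm-reduct-uhe} (with the weight-zero Witt generator being the appropriate combination of $\Phi_0$ and $\Psi_0$), and observe that $\Psi_0$ spans a commuting $\mathfrak{a}_1$. The only notable difference is that you explicitly flag the completeness question --- that the weight-by-weight solution of~\eqref{eq:17} produces nothing beyond these generators --- which the paper's (very terse) proof leaves implicit.
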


\begin{proof}
  Similar to the proof of Theorem~\ref{sec:thm-nonl-symm-reduct-uhe}, we first
  choose a convenient value of~$p$, which is~$p=-1$ in our case, and
  set~$\tilde{\Phi}_2^{-1} = -\Phi_2^{-1}$.  The $\mathfrak{w}$-component is
  constructed exactly in the same way as it was done in the proof of
  Theorem~\ref{sec:thm-nonl-symm-reduct-uhe}. The Abelian component is spanned
  by the symmetry~$\Psi_0^{-1}$ which obviously commutes with
  all~$\Phi_i^{-1}$.
\end{proof}

\subsection{Explicit formulas}
\label{sec:explicit-formulas-rd}

Let us describe the lifts~$\Phi_i^p$, $i=-2$, $-1$, $1$, $2$, explicitly:
\begin{align*}
  \phi_{-2}^{p,i}
  &=  (p + 4)\left(-r_{i-2}^p - \sum_{j=1}^\infty\left(\left(\frac{-1}{p +
    2}\right)^j Q_{j-1,i-j-1}^p \prod_{l=0}^{j-1}
    \big(l(p + 2) - 2\big)\right)\right),\\
  \phi_{-1}^{p,i}
  &=r_{i,x}^p + (p + 3)\left(r_{i-1}^p + \sum_{j=1}^\infty\left(
    \left(\frac{-1}{p + 2}\right)^j Q_{j-1,i-j}^p \prod_{l=0}^{j-1}
    \big(l(p + 2) - 1\big)\right)\right),\\
  \phi_1^{p,i}
  &= (p + 2)\left(\left(3u +\frac{5}{2}x^2\right)r_{i,x}^p - 2xr_{i,y}^p - (i
    + 4)r_{i+1}^p\right) + 2(p + i + 4)r_{-1}^pr_i^p\\
    &- (p + 1)\left(\sum_{j=0}^\infty\left(\frac{-1}{p + 2}\right)^j
      Q_{j,i-j+1}^p\prod_{l=0}^j\big(l(p + 2) + 1\big)\right),\\
  \phi_2^{p,i}
  &=\left((p + 2)\left((5p + 24)xu - \frac{1}{6}(5p^2 + 20p - 18)x^3\right)
    - 5r_1^p\right)r_{i,x}^p \\
  &- (p + 2)\left((4u + 5x^2)r_{i,y}^p - (i + 6)r_{i+2}^p\right) 
    - (p + 2)\big((2ip + 13p + 9i + 36)x^2+ 4pu\big)r_i^p\\
  &+ 3(p + i + 5)r_{-1}^pr_{i+1}^p + 4(i + 4)r_0^p r_i^p
    - p\sum_{j=0}^\infty\left(\left(\frac{-1}{p + 2}\right)^jQ_{j,i-j+2}^p
    \prod_{l=0}^j\big(l(p + 2) + 2\big)\right).
\end{align*}
Here the quantities~$Q_{i,j}^p$ are given by Equations~\eqref{eq:16}.

\section{Discussion}
\label{sec:discussion}

Let us conclude with several remarks:
\begin{itemize}
\item All the nonlocal symmetry algebras of linear degenerate equations
  (see~\cite{B-K-M-V-2018}) and their reductions (see~\cite{H-K-M-V-jnmp} and
  the results above) contain the Witt algebra~$\mathfrak{w}$ as their
  semi-direct (or direct) summand.
\item In all the constructions used to describe the symmetry algebras
  structures the crucial role is played by the operators similar
  to~$\mathcal{Y}$ and~$\mathcal{Z}$ from Sections~\ref{sec:symm-reduct-UHE}
  and~\ref{sec:symm-reduct-rd} and the quantities~$P_{i,j}^p$ and~$Q_{i,j}^p$. It
  is interesting to understand the geometric origins of these objects.
\item It is also interesting to study other Lax integrable equations in
  dimension~$>2$ that are not linear degenerate and compare their nonlocal
  symmetry structure with the already known results.
\end{itemize}
We plan to shed the light on the last two items in the forthcoming research.

\section*{Acknowledgments}

Computations were supported by the \textsc{Jets} software,~\cite{Jets}.

The first author (PH) was supported by the Specific Research grant SGS/6/2017
of the Silesian University in Opava.  The second author (ISK) was partially
supported by the \emph{2017 Dobrushin professor} grant. The third author (OIM)
is grateful to the Polish Ministry of Science and Higher Education for
financial support.

\end{document}